\documentclass[journal]{IEEEtran}
 
\usepackage{xkeyval}
\usepackage{algorithm}
\usepackage{algorithmicx}
\usepackage{algpseudocode}
\usepackage{amsfonts}
\usepackage{bm}
\usepackage[dvipsnames]{xcolor}
\usepackage[normalem]{ulem} 
\usepackage{import}
\usepackage{tikz}
\usepackage{xcolor}
\usetikzlibrary{arrows.meta,calc,shapes.geometric}
\definecolor{cFroz}{RGB}{90,0,90}
\definecolor{cICW}{RGB}{106,61,154}
\definecolor{cGrad}{RGB}{217,95,2}
\definecolor{cBest}{RGB}{27,158,119}
\definecolor{cCost}{RGB}{215,35,35}
\usepackage{footnote}

\usepackage{booktabs,adjustbox,colortbl, rotating}
\usepackage[dvipsnames]{xcolor}
\newif\ifshowred
\showredtrue 

\definecolor{darkpurple}{RGB}{255,0,255}
\newif\ifshowpurple
\showpurpletrue 

\usepackage{hyperref}
\usepackage{ragged2e} 
\usepackage{nopageno}
\usepackage{amsthm,amsmath,amssymb}
\allowdisplaybreaks

\newtheorem{proposition}{Proposition}
\newtheorem{remark}{Remark}

\hypersetup{
    colorlinks=true,
    linkcolor=blue,
    filecolor=magenta,      
    urlcolor=blue,
    pdftitle={},
    pdfpagemode=FullScreen,
    citecolor = blue,
    }
\usepackage{multirow}
\usepackage{graphicx, pifont} 
\usetikzlibrary{arrows.meta}
\usetikzlibrary{positioning,fit,backgrounds}
\usetikzlibrary{decorations.pathmorphing,patterns}
\usepackage{tikz}
\usepackage{tikz-network}
 \usepackage{xcolor}
 \usepackage{array}
\usepackage{booktabs}
\usepackage{makecell}
\usepackage{pifont}

\usepackage{ifdraft}
\algnewcommand\Output{\item[\textbf{Output:}]}
\usepackage{mathtools}

\DeclareMathOperator*{\argmin}{arg\,min}

\usepackage[noadjust]{cite}
\usepackage{placeins} 

\definecolor{ok}{RGB}{34,139,34}
\definecolor{bad}{RGB}{200,30,30}
\definecolor{mid}{RGB}{210,140,0}

\begin{document}



\title{Gradient-Free Topology Adaptation for Power Flow Surrogates via In-Context Whitening}

\author{ Ayushi Jolotia and Parikshit Pareek$^{\dagger}$ \vspace{-10mm}
\thanks{\noindent $^\dagger$Corresponding Author;$\,$ \\ 
Authors are with the Department of Electrical Engineering, Indian Institute of Technology Roorkee. 
\textit{\{ayushi\_j;pareek\}@ee.iitr.ac.in}; The authors acknowledge the funding support provided by the ANRF PM Early Career Research Grant (ANRF/ECRG/2024/001962/ENS) and the IIT Roorkee Faculty Initiation Grant (IITR/SRIC/1431/FIG-101078). }}

\maketitle

\begin{abstract}
Machine-learned surrogates for the AC power flow (ACPF) problem amortize the cost of repeated solves on a fixed network, but lose one to two orders of magnitude of accuracy when a line outage changes the topology. This degradation is an operator shift. The altered admittance matrix changes the input-to-output map, so identical inputs yield a different output distribution. Existing methods correct this with target-topology data and per-topology gradient steps. We ask whether the correction can instead be made statistical and gradient-free. We propose In-Context Whitening (ICW), which trains an ACPF surrogate in an output space whitened by the base topology's first two moments, and adapts it to an unseen N-1 or N-2 topology by re-estimating that whitening from a few hundred solved cases on the new topology. This adaptation is gradient-free, weight-free, and architecture-agnostic. We prove that among affine whiteners the unique choice that preserves the coordinate-wise semantics of the physical output vector is ZCA whitening, so within efficient invertible corrections, two moments are sufficient. Across the IEEE 30-, 118-, and 300-bus systems under N-1 and N-2 contingencies, ICW reduces overall error by 6$\times$ to 28$\times$ over frozen surrogates (up to 54$\times$ per-quantity under N-2) and cuts worst-bus power-balance mismatch by up to 30$\times$, with consistent
gains across three backbones. At deployment scale it matches or beats gradient-based adaptation in accuracy while adapting 21$\times$ to 34$\times$ faster, with a cost that parallelizes on commodity CPU cores rather than requiring one GPU per contingency.
\end{abstract}
\begin{IEEEkeywords}
AC Power Flow, In-Context Learning, Topology Adaptation, Power Flow under Uncertainty
\end{IEEEkeywords}

\section{Introduction}
The AC power flow (ACPF) problem is the core computation in grid operations: given generation and demand, it solves the remaining steady-state variables (e.g., bus voltages, generator reactive power, and slack-bus injection) \cite{wood2013power}. Further, as operators must assess many load/topology conditions, especially under uncertainty in demand, renewables, and network topology, ACPF is run over large scenario ensembles as probabilistic power flow (PPF) \cite{ppf,ppf2}. To address these computational challenges, data-driven surrogate models are an alternative. Trained on Newton--Raphson load flow (NRLF) \cite{wood2013power} solved instances, an ML model can predict ACPF solutions in a forward pass, amortizing the cost of repeated NRLF evaluations across operating conditions on the same topology. These models can also provide initializations or fallback predictions when the iterative solver struggles. Prior work spans feedforward networks \cite{yaniv2023, fikri2018}, graph neural networks (GNNs)\cite{donon2020, lin2024}, physics-informed networks \cite{hu2021}, data-driven linearisations \cite{liu2019}, and Gaussian-process-based formulations \cite{pareek2022}. On their training topology of power network, they achieve acceptable accuracy while running orders of magnitude faster than NRLF. See the community resource \cite{mlopf_wiki} ML-based power-flow surrogates.

However, surrogates trained on a fixed topology can fail as soon as the topology changes: outages or switching modify the admittance matrix and therefore the input--output operator. Even with the same inputs, the solution distribution shifts, with errors often increasing by 1--2 orders of magnitude on N-1 contingencies and worsening on N-2 \cite{nazari2025,ly2025}. A natural fix is to provide the network graph explicitly (e.g., GNN-based PowerFlowNet \cite{lin2024}), but this alone does not eliminate the shift because ACPF solutions depend on the interaction between injections and the altered network, and message-passing weights learned on training load--graph interactions must extrapolate on unseen topologies \cite{nazari2025,ly2025}. This brittleness makes topology adaptation unavoidable. To recover accuracy, surrogates typically need \textit{target-topology data} and \textit{gradient-based updates} (retraining or fine-tuning). Existing approaches follow this pattern, e.g., per-topology auxiliary fine-tuning like UPITL \cite{nazari2025}, few-shot fine-tuning from a meta-learned initialisation \cite{chen2022meta,jia2024,zhou2022deepopfft}, or training separate models for clusters of similar topologies like MMNP \cite{ly2025} or Modular GP~\cite{caetano2026modulargp}. The bottleneck is not inference but the per-topology data generation and optimisation, so reducing this adaptation time (data + compute) remains the key challenge.

In this paper, rather than proposing a new architecture or relying on retraining, we ask two questions. First, \textit{can contingency adaptation be recast as statistical adaptation, by training the surrogate on the base topology in an output space that suppresses topology-specific distributional structure?} Second, \textit{can the contingency-specific statistics then be reinstated at inference from a small set of NR-solved samples, without additional gradient steps or architectural changes, enabling gradient-free, in-context adaptation to a new topology?}

We answer both questions by operationalizing statistical adaptation as distribution alignment in the surrogate's output space. A guiding principle is that, in practice, two distributions tend to be closer when their low-order moments match. By the domain adaptation bound of Ben-David et al.~\cite{bendavid2010}, a smaller distance between the base topology's and the contingency topology's output distributions implies a smaller transfer error for the surrogate. Moreover, matching low-order moments is a standard and effective way to reduce this distance in domain adaptation \cite{sun2016coral,li2018adabn}, though these works act on internal feature representations, whereas we act directly on the surrogate's output space. To make this output space practical, the transform must be efficient and invertible, since the benefit of replacing NRLF with a surrogate $f_w$ is lost if the transformation around $f_w$ is heavier than the forward pass or lacks a closed-form inverse. Finally, once the surrogate is trained in the transformed output space, its predictions must be mapped back to physical quantities at inference with minimal time overhead and without additional gradient steps or architectural changes to the surrogate backbone.

We propose \textit{In-Context Whitening} (ICW), a lightweight, invertible output-space transform that aligns contingency outputs by matching low-order moments. The method collects the required statistics estimated from a small set of solved ACPF cases on each new topology, i.e., \textit{in context}, rather than frozen at training time.
Concretely, ICW carries a mean vector and a covariance matrix, matches these first two moments exactly, and inverts through a single matrix--vector product with one matrix inversion per topology and not per sample.
At inference on a new contingency, the moments required by this whitening transformation are re-estimated from its context set, yielding a topology-specific $T$ and $T^{-1}$ without any gradient steps. Accordingly, an ICW-based surrogate is trained to predict in the whitened space, learning $f_w:\mathbf{x}\to\mathbf{z}$ instead of $f_w:\mathbf{x}\to\mathbf{y}$, where $\mathbf{z}=T(\mathbf{y})$.
At test time, the surrogate output $\widehat{\mathbf{z}}$ is mapped back to physical quantities via $\widehat{\mathbf{y}} = T^{-1}(\widehat{\mathbf{z}})$, separating ICW from inference-time de-normalization with fixed training statistics. The main contributions of this paper are summarized as follows:

\begin{itemize}

\item \textbf{Gradient-free topology adaptation.} We propose In-Context Whitening (ICW), which trains an ACPF surrogate on base-topology data in a whitened output space and adapts it to an unseen N-1 or N-2 topology by re-estimating the whitening statistics \textit{in context} from a small set of NR-solved samples (a few hundred cases). This per-topology adaptation requires no retraining, gradient steps, or architectural changes, and adds only a lightweight affine transform (one matrix inversion per topology). Its adaptation cost is therefore context collection alone, incurred once per contingency, with no gradient budget to tune and nothing to re-optimise as outages accumulate.

\item \textbf{Efficient, coordinate-preserving whitening with two-moment adaptation.} We show that practical requirements of efficiency and invertibility naturally lead to affine transformations and thereby restrict statistical adaptation to the first two moments (Section~\ref{subsec:affine}). Among affine whiteners, we prove that the unique choice minimizing distortion of the physical coordinates is ZCA whitening (Proposition~\ref{prop:zca}). Empirically, a lossless third-moment correction yields no further accuracy gain beyond matching mean and covariance.

\item \textbf{Consistent gains across systems, contingencies, and backbone architectures.} Across the IEEE 30-, 118-, and 300-bus systems under N-1 and N-2 contingencies and three backbones (MLP, Transformer, and PFGNN~\cite{lin2024}), ICW matches the accuracy of context-using adaptation baselines while avoiding per-topology gradient steps, and reduces overall contingency error by $6\times$ to $28\times$ relative to frozen, non-adaptive surrogates (up to $54\times$ per-quantity under N-2).
\end{itemize}

\begin{figure}[h]
\centering
\begin{tikzpicture}[
  >=Stealth, font=\small,
  ax/.style={->, gray!55, thick},
]
\draw[ax] (0,0) -- (7.4,0);
\draw[ax] (0,0) -- (0,3.5);
\node[rotate=90, anchor=south] at (-0.2,2.0) {Prediction error\,$\downarrow$};
\node[anchor=north] at (4.2,-0.1) {Adaptation cost per topology\,$\rightarrow$};

\draw[cBest, dashed, line width=0.9pt] (0.4,0.45) -- (7.3,0.45);
\node[anchor=north, cBest, font=\footnotesize] at (3.2,0.5) {Accuracy floor};

\draw[cFroz!80, dotted, line width=0.9pt] (1.6,0.9) -- (1.6,3.2);
\node[anchor=east, cFroz, font=\footnotesize, align=right] at (1.5,2.0)
  {Same Cost\\$T_{\mathrm{data}}$};

\draw[cGrad, line width=1.1pt]
  (1.6,3.2) .. controls (2.2,2.05) and (4.3,0.82) .. (7.0,0.6);
\node[cGrad, font=\footnotesize, align=center] at (4.2,2)
  {Fine-tune / Scratch\\(gradient steps)};
\filldraw[cBest] (7.0,0.6) circle (2.4pt);
\node[anchor=west, cBest, font=\footnotesize, align=center] at (5.3,1.1)
  {Trained per topology\\(best accuracy)};

\filldraw[cFroz] (0.9,3.1) circle (2.4pt);
\node[anchor=south, cFroz, font=\footnotesize] at (0.9,3.2) {frozen};

\node[star, star points=5, star point ratio=2.3, minimum size=13pt, inner sep=0,
      fill=cICW, draw=cICW] at (1.6,0.9) {};
\node[anchor=west, cICW, font=\footnotesize, align=left] at (1.92,0.9)
  {\textbf{Proposed ICW}\\ (no gradient step)};

\node[draw, rounded corners, anchor=north west, inner sep=2pt,
      text width=8cm, font=\footnotesize, align=left] at (-0.45,-0.7)
  {\textbf{Per new topology}\\[2pt]
   \textbf{ICW}:\ $C=T_{\mathrm{data}}$ {\scriptsize(no grad step)}$\quad$
   \textbf{Grad}:\ $C=T_{\mathrm{data}}+\textcolor{cCost}{T_{\mathrm{grad}}}$};

\node[draw, rounded corners, anchor=north west, inner sep=2pt,
      text width=8cm, font=\footnotesize, align=left] at (-0.45,-1.5)
  {\textbf{Across $K$ contingencies}\\[2pt]
   \textbf{ICW}:\ $K\,T_{\mathrm{data}}$ $\quad$
   \textbf{Grad}:\ $K\big(T_{\mathrm{data}}+\textcolor{cCost}{T_{\mathrm{grad}}}\big)$};

\end{tikzpicture}
    \caption{Cost-accuracy position of ICW versus gradient-based adaptation on a new topology. Axes: per-topology adaptation cost (horizontal) and prediction error (vertical). Gradient-based methods (fine-tune, scratch) trace a curve from the frozen error down toward the accuracy floor as gradient steps accumulate, at cost $T_\mathrm{data}+T_\mathrm{grad}$. ICW (star) collects the same context set but takes no gradient step, reaching near-floor accuracy at cost $T_\mathrm{data}$ alone. Across $K$ contingencies the gradient cost compounds as $K(T_\mathrm{data}+T_\mathrm{grad})$ while ICW's stays at $K\,T_\mathrm{data}$.}
\label{fig:cost-position}
\end{figure}

\section{Motivation and Problem Formulation}\label{sec:background}

The AC power flow (ACPF) problem can be posed as solving a nonlinear system of equations $F(\mathbf{x},\mathbf{y};\mathbf{Y}_{\mathrm{bus}})=0$, where $\mathbf{x}$ collects specified operating quantities (e.g., active/reactive demands, generator voltage setpoints, non-slack generator active-power setpoints, and the slack-bus angle reference) and $\mathbf{y}$ collects the remaining steady-state variables to be determined (e.g., bus voltage magnitudes/angles and generator power injections). For a fixed network topology represented by the bus admittance matrix $\mathbf{Y}_{\mathrm{bus}}$, these equations implicitly define an input--output operator $\mathcal{F}_{\mathbf{Y}_{\mathrm{bus}}}:\mathbf{x}\mapsto\mathbf{y}$.

Consider a power network with bus set $\mathcal{N}$ and transmission-line set $\mathcal{E}$. The complex voltage at bus $j\in\mathcal{N}$ is $v_j = V_j\angle\theta_j$, where $V_j$ and $\theta_j$ denote voltage magnitude and phase angle. Let $p_j$ and $q_j$ denote the net active and reactive power injections at bus $j$. For a given topology encoded by $\mathbf{Y}_{\mathrm{bus}} \in \mathbb{C}^{|\mathcal{N}| \times |\mathcal{N}|}$, the ACPF equations are
\begin{equation}
  p_j + \mathrm{j}\,q_j
  \;=\;
  v_j \sum_{k \in \mathcal{N}} \overline{Y}_{jk}\, \overline{v}_k,
  \qquad j \in \mathcal{N}.
  \label{eq:acpf}
\end{equation}
Here, $Y_{jk}$ is the $(j,k)$-th entry of $\mathbf{Y}_{\mathrm{bus}}$ and $\overline{(\cdot)}$ denotes complex conjugation. 
The ACPF task is to solve for $\mathbf{y}$ satisfying~\eqref{eq:acpf}. Because the equations are nonlinear, solutions are typically computed with an iterative method such as NRLF. While a single deterministic solve is modest for small-to-medium networks, the cost grows substantially in uncertainty-aware studies that require thousands of solves across load realizations, generation scenarios, or contingency conditions.

To solve this, ACPF surrogate learning instead trains a parameterized model on pre-solved instances represented as input--output pairs $(\mathbf{x},\mathbf{y})$. We use the input partition
$ \mathbf{x}
  =
  \bigl[
  \mathbf{p}_{l},\;
  \mathbf{q}_{l},\;
  \mathbf{v}_{g},\;
  \mathbf{p}^{\mathrm{ns}}_{g},\;
  \boldsymbol{\theta}^{\mathrm{s}}
  \bigr],$
where $\mathbf{p}_{l}$ and $\mathbf{q}_{l}$ are active/reactive load demands, $\mathbf{v}_{g}$ are generator voltage setpoints, $\mathbf{p}^{\mathrm{ns}}_{g}$ are active-power setpoints of non-slack generators, and $\boldsymbol{\theta}^{\mathrm{s}}$ is the slack-bus angle reference. The corresponding output is
$  \mathbf{y}
  =
  \bigl[
  \mathbf{p}_{g}^{\mathrm{s}},\;
  \mathbf{q}_{g},\;
  \mathbf{v}_{l},\;
  \boldsymbol{\theta}^{\mathrm{ns}}
  \bigr],$
where $\mathbf{p}_{g}^{\mathrm{s}}$ is the slack-bus active power injection, $\mathbf{q}_{g}$ are reactive power outputs of all generators, $\mathbf{v}_{l}$ are voltage magnitudes at load buses, and $\boldsymbol{\theta}^{\mathrm{ns}}$ are phase angles at all non-slack buses. Superscripts $\mathrm{s}$ and $\mathrm{ns}$ denote slack and non-slack quantities, respectively.

A surrogate model $f_w$ with parameters $w$ is then trained to predict $\mathbf{y}$ from $\mathbf{x}$. Once trained, it produces an estimate in a forward pass,
\begin{align}
  \widehat{\mathbf{y}} = f_w(\mathbf{x}).
  \label{eq:pf_functional}
\end{align}

In the next subsection, we show that when the network topology changes (i.e., $\mathbf{Y}_{\mathrm{bus}}$ changes), the conditional mapping from $\mathbf{x}$ to $\mathbf{y}$, and hence the distribution of outputs, shifts, causing a surrogate trained on the base topology to degrade.

\subsection{Operator Shift Under Topology Changes}
\label{subsec:operator_shift}

The ACPF equations in~\eqref{eq:acpf} define an operator $\mathcal{F}_{\mathbf{Y}_{\mathrm{bus}}}$ that maps an input vector $\mathbf{x}$ to an output vector $\mathbf{y}$ \emph{for a fixed network topology}. The topology enters this operator through the bus admittance matrix $\mathbf{Y}_{\mathrm{bus}}$, which is assembled from the connectivity of buses in $\mathcal{N}$ along the edges in $\mathcal{E}$. Both the iterative NRLF solver and the surrogate $f_w$ in~\eqref{eq:pf_functional} are inherently tied to a particular topology: NRLF requires $\mathbf{Y}_{\mathrm{bus}}$ to assemble its Jacobian at every iteration, and the surrogate is trained on a dataset $\{(\mathbf{x}^{(i)}, \mathbf{y}^{(i)})\}_{i=1}^{N}$ whose targets $\mathbf{y}^{(i)}$ are all solutions of~\eqref{eq:acpf} with a single $\mathbf{Y}_{\mathrm{bus}}$. The parameters $w$ are therefore fitted to the input--output mapping induced by that one topology, and the surrogate performs accurately on it under load uncertainty. Throughout, $\mathbf{y}$ denotes the true ACPF solution, whereas $\widehat{\mathbf{y}}=f_w(\mathbf{x})$ denotes the surrogate prediction.

When the operating topology changes, this assumption is no longer valid. A line outage corresponds to an $N$-$k$ contingency, where $k$ denotes the number of lines removed from the network. Removing any edge from $\mathcal{E}$ alters $\mathbf{Y}_{\mathrm{bus}}$, which reroutes power through the remaining lines and produces different voltage magnitudes and angles throughout the network even for the very same injection profile $\mathbf{x}$ (depending on the pre-outage line flows). Accordingly, the physical operator $\mathcal{F}_{\mathbf{Y}_{\mathrm{bus}}}$ is replaced by a new operator $\mathcal{F}^{(N\text{-}k)}$, and the dataset that NRLF would generate on the contingency topology is therefore \emph{not} the dataset $f_w$ was trained on. We refer to this change in the underlying physical mapping as an \emph{operator shift}.

Importantly, operator shift is not merely a change in the input distribution. Let $\mathbf{x}\sim P_X$ denote operating conditions and define $\mathbf{y}=\mathcal{F}_{\mathbf{Y}_{\mathrm{bus}}}(\mathbf{x})$. Then each topology induces an output distribution over solutions $\mathbf{y}$. Even when $P_X$ is identical across topologies, the induced distribution of $\mathbf{y}$ changes because the underlying operator depends on $\mathbf{Y}_{\mathrm{bus}}$, which is directly altered by the outage. Denoting the resulting output distributions under the base and contingency topologies as $\mathcal{D}_{\mathrm{base}}$ and $\mathcal{D}_{N\text{-}k}$, respectively,
\begin{equation}
  \mathcal{D}_{\mathrm{base}}
  \;\neq\;
  \mathcal{D}_{N\text{-}k}.
  \label{eq:dist_shift}
\end{equation}

This mismatch in output distributions directly translates into prediction failure: a surrogate $f_w$ whose weights were fitted to $\mathcal{D}_{\mathrm{base}}$ is systematically inaccurate when evaluated on samples drawn from $\mathcal{D}_{N\text{-}k}$, because the mapping it has internalized is no longer the one generating the test data. One way to address this is to train the model on all possible topologies. However, the number of $N$-$k$ contingencies grows combinatorially with network size, making this computationally prohibitive. 
We instead seek a way to match these distributions at test time without retraining and arrive naturally at the central question of this paper: \textit{what statistical information from the test-time topology is sufficient to account for the operator shift induced by topology changes?}

\begin{remark}[Graph-based surrogates and operator shift]
\label{rem:gnn_shift}
GNN surrogates such as PFGNN~\cite{lin2024} differ from the vectorized surrogate \eqref{eq:pf_functional} as they accept the contingency-specific adjacency, or equivalently the post-outage $\mathbf{Y}_{\mathrm{bus}}$, as an input at inference time. Their forward pass is therefore nominally aware of the topology change. This alone, however, does not resolve the operator shift. The message-passing weights of the GNN are still fitted to the joint distribution of $(\mathbf{x}, \mathbf{Y}_{\mathrm{bus}}, \mathbf{y})$ realised during training; when the test-time $\mathbf{Y}_{\mathrm{bus}}$ corresponds to a topology that was \emph{not} represented in the training set, the network is asked to extrapolate the learned mapping to an unseen $\mathbf{Y}_{\mathrm{bus}}$, and supplying the new admittance matrix as input is not in itself sufficient to recover the post-contingency operator $\mathcal{F}^{(N\text{-}k)}$. Section~\ref{sec:sota-comparison} confirms this empirically: even with the contingency $\mathbf{Y}_{\mathrm{bus}}$ available at inference, PFGNN exhibits the same one-to-two-orders-of-magnitude degradation under $N$-$1$ and $N$-$2$ outages as the topology-agnostic baselines. Operator shift therefore needs to be corrected at the level of the output distribution, independently of whether the backbone consumes the contingency graph.
\end{remark}

\section{Proposed Method: In-Context Whitening}
\label{sec:icw}
A surrogate trained on the base topology learns to predict the ACPF solution distribution induced by that topology. Under an $N$-$k$ contingency, the physical mapping changes and the resulting outputs occupy a different region of the solution space (Section~\ref{subsec:operator_shift}). Our goal is therefore to \emph{re-anchor} contingency-topology outputs to a topology-invariant whitened space using a lightweight, test-time procedure that requires no gradient updates, adds negligible per-sample cost, and preserves a closed-form inverse.

A natural way to reduce cross-topology mismatch is to align low-order statistics of the output distribution. In particular, shifting and linearly re-scaling/rotating outputs can exactly align the first two moments (mean and covariance), which typically capture the dominant operating-point and scaling/correlation differences induced by topology changes.

\subsection{Restriction to Affine Transformations}
\label{subsec:affine}
We seek a transformation that makes outputs from different topologies comparable while keeping inference fast and exactly invertible. The constraint that decides what kind of map we can afford is computational, not statistical: the benefit of replacing NRLF with $f_w$ evaporates if the pre- and post-processing around $f_w$ is heavier than the forward pass itself or lacks a closed-form inverse. General moment-matching maps beyond the first two moments are nonlinear; such nonlinear corrections typically forfeit a closed-form inverse, making test-time inversion as expensive as an iterative solve.

This motivates restricting to affine transformations. Affine maps have closed-form inverses, carry exactly enough parameters to specify a mean vector and a covariance matrix, and are a standard choice for moment-based correction in domain adaptation~\cite{sun2016coral, li2018adabn}. Thus, we consider
\begin{equation}
  \mathbf{z} = T_R(\mathbf{y}) \;=\; \mathbf{W}_R\,(\mathbf{y} - \mathbf{c}_R), \qquad R \in \{\mathcal{D}_{\mathrm{base}}, \mathcal{D}_{N\text{-}k}\},
  \label{eq:affine}
\end{equation}
parameterized by a shift $\mathbf{c}_R \in \mathbb{R}^d$ and a matrix $\mathbf{W}_R \in \mathbb{R}^{d \times d}$. Its inverse $T_R^{-1}(\mathbf{z}) = \mathbf{W}_R^{-1}\mathbf{z} + \mathbf{c}_R$ is a single matrix-vector product, evaluated in constant time per sample with a one-time computation of the matrix inverse.

Within this class of transformations, the first two moments can be matched exactly. Setting $\mathbf{c}_R = \boldsymbol{\mu}_R$ centers $T_R(\mathbf{y})$ at zero mean, and choosing $\mathbf{W}_R$ to satisfy the whitening constraint
\begin{equation}
  \mathbf{W}_R\,\boldsymbol{\Sigma}_R\,\mathbf{W}_R^{\top} \;=\; \mathbf{I}_d
  \label{eq:white_constraint}
\end{equation}
forces its covariance to the identity. Applied separately to $\mathcal{D}_{\mathrm{base}}$ and $\mathcal{D}_{N\text{-}k}$, the constraint leaves their transformed versions with identical first and second moments by construction; any remaining cross-topology discrepancy is therefore confined to third and higher moments.

\subsection{Selecting the Optimal Whitener}
\label{subsec:optimal_whitener}
The whitening constraint~\eqref{eq:white_constraint} does not pin down $\mathbf{W}_R$ uniquely. If any matrix $\mathbf{W}_0$ satisfies \eqref{eq:white_constraint} and $\mathbf{M}$ is any orthogonal matrix ($\mathbf{M}\mathbf{M}^{\top} = \mathbf{I}_d$), then $\mathbf{W} := \mathbf{M}\mathbf{W}_0$ also satisfies it. The feasible set is therefore $\mathcal{W}_R = \{\mathbf{M}\,\boldsymbol{\Sigma}_R^{-1/2}\}$, i.e., any orthogonal rotation of $\boldsymbol{\Sigma}_R^{-1/2}$ is also a valid whitener. Importantly, the choice of $\mathbf{M}$ matters when the output vector contains quantities at different scales. In the power-flow setting the output vector $\mathbf{y} = [\mathbf{p}_g^{\mathrm{s}},\,\mathbf{q}_g,\,\mathbf{v}_l,\,\boldsymbol{\theta}^{\mathrm{ns}}]$ contains physically distinct quantities, each in its own units and with its own downstream interpretation. We want the transformation to best preserve the coordinate-wise semantics of $\mathbf{y}$, so that prediction errors in the whitened space translate directly to errors in the correct physical quantity. Thus, among all whitening candidates, we select the one that minimizes the \textit{mean-squared deviation} between the whitened and the original centered output:
\begin{equation}
  \mathbf{W}_R^{*}
  \;=\;
  \argmin_{\mathbf{W} \in \mathcal{W}_R}\;
  \mathbb{E}_R\bigl\|
    \mathbf{W}(\mathbf{y} - \boldsymbol{\mu}_R)
    - (\mathbf{y} - \boldsymbol{\mu}_R)
  \bigr\|^2.
  \label{eq:WR_opt}
\end{equation}
Below we show that the optimization in~\eqref{eq:WR_opt} admits a unique closed-form solution, and it is exactly known ZCA whitening matrix \cite{kessy2018whitening}.
\begin{proposition}[Optimal whitener]
\label{prop:zca}
The unique solution of~\eqref{eq:WR_opt} is the ZCA whitening matrix
\begin{equation}
  \mathbf{W}_R^{*}
  \;=\;
  \boldsymbol{\Sigma}_R^{-1/2}
  \;=\;
  \mathbf{U}\,\mathrm{diag}(s_1^{-1/2},\ldots,s_d^{-1/2})\,\mathbf{U}^{\top},
  \label{eq:opt_W}
\end{equation}
where $\boldsymbol{\Sigma}_R = \mathbf{U}\,\mathrm{diag}(s_1,\ldots,s_d)\,\mathbf{U}^{\top}$ is the spectral decomposition of $\boldsymbol{\Sigma}_R$.
\end{proposition}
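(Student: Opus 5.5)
The plan is to reduce the objective in~\eqref{eq:WR_opt} to a trace expression that is linear in the free orthogonal factor, and then solve the resulting orthogonal Procrustes-type problem in closed form. Write $\tilde{\mathbf{y}} = \mathbf{y}-\boldsymbol{\mu}_R$, so that $\mathbb{E}_R[\tilde{\mathbf{y}}\tilde{\mathbf{y}}^\top]=\boldsymbol{\Sigma}_R$. Throughout we assume $\boldsymbol{\Sigma}_R\succ 0$, which is needed for~\eqref{eq:white_constraint} to be solvable at all. Expanding the squared norm and using linearity of expectation gives
\begin{equation*}
\mathbb{E}_R\|\mathbf{W}\tilde{\mathbf{y}}-\tilde{\mathbf{y}}\|^2 = \mathrm{tr}(\mathbf{W}\boldsymbol{\Sigma}_R\mathbf{W}^\top) - 2\,\mathrm{tr}(\mathbf{W}\boldsymbol{\Sigma}_R) + \mathrm{tr}(\boldsymbol{\Sigma}_R).
\end{equation*}
On the feasible set, the first term equals $\mathrm{tr}(\mathbf{I}_d)=d$ by~\eqref{eq:white_constraint}, and the last term does not depend on $\mathbf{W}$. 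The problem is therefore equivalent to maximizing $\mathrm{tr}(\mathbf{W}\boldsymbol{\Sigma}_R)$ over $\mathcal{W}_R$.

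Next I make the parametrization of $\mathcal{W}_R$ rigorous. For any feasible $\mathbf{W}$, set $\mathbf{M}:=\mathbf{W}\boldsymbol{\Sigma}_R^{1/2}$. Then $\mathbf{M}\mathbf{M}^\top=\mathbf{W}\boldsymbol{\Sigma}_R\mathbf{W}^\top=\mathbf{I}_d$, so $\mathbf{M}$ is orthogonal. Conversely, every orthogonal $\mathbf{M}$ yields a feasible $\mathbf{W}=\mathbf{M}\boldsymbol{\Sigma}_R^{-1/2}$. This confirms $\mathcal{W}_R=\{\mathbf{M}\boldsymbol{\Sigma}_R^{-1/2}:\mathbf{M}\mathbf{M}^\top=\mathbf{I}_d\}$, as stated before the proposition. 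Substituting, the objective becomes
\begin{equation*}
\mathrm{tr}(\mathbf{M}\boldsymbol{\Sigma}_R^{-1/2}\boldsymbol{\Sigma}_R)=\mathrm{tr}(\mathbf{M}\boldsymbol{\Sigma}_R^{1/2}),
\end{equation*}
which we maximize over the orthogonal group.

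The key step is a trace inequality together with its equality case. Write $\boldsymbol{\Sigma}_R^{1/2}=\mathbf{U}\mathbf{S}^{1/2}\mathbf{U}^\top$ with $\mathbf{S}=\mathrm{diag}(s_1,\ldots,s_d)$ and all $s_i>0$. Let $\mathbf{Q}=\mathbf{U}^\top\mathbf{M}\mathbf{U}$, which is again orthogonal. Then
\begin{equation*}
\mathrm{tr}(\mathbf{M}\boldsymbol{\Sigma}_R^{1/2})=\mathrm{tr}(\mathbf{Q}\mathbf{S}^{1/2})=\sum_i Q_{ii}\,s_i^{1/2}.
\end{equation*}
Since $\mathbf{Q}$ is orthogonal, $|Q_{ii}|\le 1$, so the sum is at most $\sum_i s_i^{1/2}$, and this bound is attained at $\mathbf{Q}=\mathbf{I}_d$.

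Uniqueness is the step needing most care, because repeated eigenvalues make $\mathbf{U}$ non-unique. The argument avoids this issue. Equality forces $Q_{ii}=1$ for every $i$, since every weight $s_i^{1/2}$ is strictly positive. Each row of $\mathbf{Q}$ is a unit vector, so $Q_{ii}=1$ forces all off-diagonal entries in that row to vanish, giving $\mathbf{Q}=\mathbf{I}_d$. Hence $\mathbf{M}=\mathbf{I}_d$ and $\mathbf{W}_R^*=\boldsymbol{\Sigma}_R^{-1/2}$ for any choice of $\mathbf{U}$. This matrix coincides with the expression in~\eqref{eq:opt_W} because the symmetric positive definite square root is unique.
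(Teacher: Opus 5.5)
Your proposal is correct and follows the paper's proof essentially step for step: the same trace expansion reducing the problem to maximizing $\mathrm{tr}(\mathbf{W}\boldsymbol{\Sigma}_R)$, the same reparametrization $\mathbf{W}=\mathbf{M}\boldsymbol{\Sigma}_R^{-1/2}$, and the same bound $|Q_{ii}|\le 1$ after conjugating by $\mathbf{U}$. You are slightly more careful than the paper in two places, which it leaves implicit: you verify that every feasible $\mathbf{W}$ has this form (via $\mathbf{M}:=\mathbf{W}\boldsymbol{\Sigma}_R^{1/2}$), and you spell out why $Q_{ii}=1$ for all $i$ forces $\mathbf{Q}=\mathbf{I}_d$.
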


\begin{proof}
Assume $\boldsymbol{\Sigma}_R$ is strictly positive definite, so that $\boldsymbol{\Sigma}_R^{-1/2}$ exists; in the empirical setting of Section~\ref{subsec:icw_procedure} this is enforced by the regularization $\boldsymbol{\Sigma}_{\mathrm{ctx}} + \varepsilon \mathbf{I}$. Let $\mathbf{y}_c := \mathbf{y} - \boldsymbol{\mu}_R$ and $\boldsymbol{\Sigma} := \boldsymbol{\Sigma}_R$, so that $\mathbb{E}[\mathbf{y}_c] = \mathbf{0}$ and $\mathbb{E}[\mathbf{y}_c \mathbf{y}_c^{\top}] = \boldsymbol{\Sigma}$.

\noindent\textbf{Step 1: Simplify the objective.} By the trace identity $\mathbb{E}[\mathbf{y}_c^{\top} \mathbf{A}\,\mathbf{y}_c] = \mathrm{tr}(\mathbf{A}\boldsymbol{\Sigma})$ for any matrix $\mathbf{A}$, expanding the squared norm and taking expectations gives
\begin{equation*}
  \mathbb{E}\|\mathbf{W}\mathbf{y}_c - \mathbf{y}_c\|^2
  \;=\;
  \underbrace{\mathrm{tr}(\mathbf{W}\boldsymbol{\Sigma}\mathbf{W}^{\top})}_{=\,d
    \text{ on } \mathcal{W}_R}
  - 2\,\mathrm{tr}(\mathbf{W}\boldsymbol{\Sigma})
  + \mathrm{tr}(\boldsymbol{\Sigma}).
\end{equation*}
Since $d$ and $\mathrm{tr}(\boldsymbol{\Sigma})$ are constant on $\mathcal{W}_R$, minimizing~\eqref{eq:WR_opt} reduces to maximizing $\mathrm{tr}(\mathbf{W}\boldsymbol{\Sigma})$.

\noindent\textbf{Step 2: Reparametrise in terms of a rotation.} Every $\mathbf{W} \in \mathcal{W}_R$ can be written $\mathbf{W} = \mathbf{M}\boldsymbol{\Sigma}^{-1/2}$ for some orthogonal $\mathbf{M}$, giving $\mathrm{tr}(\mathbf{W}\boldsymbol{\Sigma}) = \mathrm{tr}(\mathbf{M}\boldsymbol{\Sigma}^{1/2})$. The problem reduces to finding the orthogonal matrix $\mathbf{M}$ that maximises $\mathrm{tr}(\mathbf{M}\boldsymbol{\Sigma}^{1/2})$.

\noindent\textbf{Step 3: Show the identity rotation is optimal.} With $\boldsymbol{\Sigma} = \mathbf{U}\mathbf{S}\mathbf{U}^{\top}$ and $\mathbf{S} = \mathrm{diag}(s_1,\ldots,s_d)$, define $\mathbf{A} := \mathbf{U}^{\top}\mathbf{M}\mathbf{U}$, which is orthogonal. By the cyclic property of the trace,
\begin{equation*}
  \mathrm{tr}(\mathbf{M}\boldsymbol{\Sigma}^{1/2})
  \;=\;
  \mathrm{tr}(\mathbf{A}\,\mathbf{S}^{1/2})
  \;=\;
  \sum_{i=1}^{d} A_{ii}\,\sqrt{s_i}.
\end{equation*}
Since $\mathbf{A}$ is orthogonal, each column has unit norm, so $|A_{ii}| \leq 1$, giving $\sum_i A_{ii}\sqrt{s_i} \leq \sum_i |A_{ii}|\sqrt{s_i} \leq \sum_i \sqrt{s_i}$, with equality if and only if $A_{ii} = 1$ for every $i$, hence $\mathbf{M}^* = \mathbf{I}_d$.

\noindent\textbf{Step 4: Conclude.} The unique minimizer of~\eqref{eq:WR_opt} is therefore $\mathbf{W}_R^* = \boldsymbol{\Sigma}_R^{-1/2}$, the ZCA whitening matrix~\cite{kessy2018whitening}.
\end{proof}

The result has a clean geometric reading. The decomposition $\mathbf{W}_R^{*} = \mathbf{U}\,\mathrm{diag}(s_i^{-1/2})\,\mathbf{U}^{\top}$ executes three steps in sequence: rotate to the principal axes of $\boldsymbol{\Sigma}_R$ via $\mathbf{U}^{\top}$, scale each axis by $1/\sqrt{s_i}$ to whiten the covariance, then rotate back to the original physical coordinates via $\mathbf{U}$. The final rotation back is what distinguishes ZCA from PCA whitening, and it is precisely what keeps each whitened coordinate $z_i$ aligned with its physical counterpart $y_i$. For any distribution $R$, we therefore have an explicit, uniquely optimal whitener: it satisfies the whitening constraint~\eqref{eq:white_constraint}, and among all valid choices it minimizes the deviation between the whitened and the original centered coordinates. 


\subsection{In-Context Whitening: Training and Inference}
\label{subsec:icw_procedure}
The optimal whitener $\mathbf{W}_R^* = \boldsymbol{\Sigma}_R^{-1/2}$ requires the population mean $\boldsymbol{\mu}_R$ and covariance $\boldsymbol{\Sigma}_R$ of the output distribution under the test-time topology, neither of which is known in advance for an unseen contingency. We propose to extract them \emph{from context}. A small set of solved ACPF instances drawn from the test-time topology supplies empirical estimates for the population moments, \textit{with no gradient update or change to the surrogate's weights}. Let $\mathcal{D}_R = \{(\mathbf{x}_i, \mathbf{y}_i)\}_{i=1}^{N_R}$ be a set of ACPF solved cases from a topology indexed by $R$, with empirical mean $\bar{\mathbf{y}}_R$ and empirical covariance $\hat{\boldsymbol{\Sigma}}_R$. The corresponding empirical whitening transformation is
\begin{equation}
  \mathbf{z} \;=\; \mathbf{W}_R\,(\mathbf{y} - \bar{\mathbf{y}}_R),
  \qquad
  \mathbf{W}_R \;=\; (\hat{\boldsymbol{\Sigma}}_R + \varepsilon \mathbf{I})^{-1/2},
  \label{eq:icw_whitening}
\end{equation}
where $\varepsilon > 0$ keeps $\mathbf{W}_R$ invertible and guarantees the positive-definiteness needed for $\mathbf{W}_R^* = \boldsymbol{\Sigma}_R^{-1/2}$ to exist (Proposition~\ref{prop:zca}), even when $\hat{\boldsymbol{\Sigma}}_R$ is rank-deficient on small samples. By construction, $\mathbf{z}$ has zero empirical mean on $\mathcal{D}_R$, and empirical covariance $\mathbf{I}_d$ up to the small contraction introduced by the $\varepsilon \mathbf{I}$ regularizer.

\textbf{Training.}\; At the training stage, the ICW is applied once, on the base topology only. The training set $\mathcal{D}_{\mathrm{base}}$ yields $\bar{\mathbf{y}}_{\mathrm{base}}$ and $\mathbf{W}_{\mathrm{base}} = (\hat{\boldsymbol{\Sigma}}_{\mathrm{base}} + \varepsilon \mathbf{I})^{-1/2}$, both computed up front and held fixed throughout training. All training targets are whitened, $\mathbf{z}_{\mathrm{base}}^{(i)} = \mathbf{W}_{\mathrm{base}}(\mathbf{y}^{(i)} - \bar{\mathbf{y}}_{\mathrm{base}})$, and the surrogate learns the mapping $f_w: \mathbf{x} \mapsto \mathbf{z}$ in whitened space rather than $\mathbf{x} \mapsto \mathbf{y}$ in physical space. Because the first two moments of the base distribution are removed before $f_w$ ever sees the ACPF output data, its weights encode only the topology-invariant residual of the ACPF operator.

\textbf{Inference.}\; At test time on a new topology, the same process is applied to a small context set $\mathcal{D}_{\mathrm{ctx}} = \{(\mathbf{x}_i, \mathbf{y}_i)\}_{i=1}^{N_{\mathrm{ctx}}}$ of ACPF-solved cases drawn from that topology, yielding contextual statistics $\bar{\mathbf{y}}_{\mathrm{ctx}}$ and $\mathbf{W}_{\mathrm{ctx}} = (\hat{\boldsymbol{\Sigma}}_{\mathrm{ctx}} + \varepsilon \mathbf{I})^{-1/2}$. For a query input $\mathbf{x}_{\mathrm{test}}$, the frozen model predicts $\widehat{\mathbf{z}}_{\mathrm{test}} = f_w(\mathbf{x}_{\mathrm{test}})$ in the whitened space, and the contextual statistics map it back to the physical scale of the current topology:
\begin{equation}
  \widehat{\mathbf{y}}_{\mathrm{test}}
  \;=\;
  \mathbf{W}_{\mathrm{ctx}}^{-1}\,\widehat{\mathbf{z}}_{\mathrm{test}}
  \;+\;
  \bar{\mathbf{y}}_{\mathrm{ctx}}.
  \label{eq:icw_inversion}
\end{equation}

The model weights $w$ never change. Unlike meta-learners that fine-tune per topology~\cite{chen2022meta}, ICW adds no test-time compute beyond a one-off computation of $\bar{\mathbf{y}}_{\mathrm{ctx}}$, $\hat{\boldsymbol{\Sigma}}_{\mathrm{ctx}}$, and $\mathbf{W}_{\mathrm{ctx}}^{-1}$ from $\mathcal{D}_{\mathrm{ctx}}$. These are computed once per topology and reused across every query on that topology, with no gradient step or architectural change. We call this procedure \emph{In-Context Whitening} (ICW). The proposed ACPF surrogate operates in a topology-invariant whitened space, and the context-dependent affine inversion in~\eqref{eq:icw_inversion} re-anchors its predictions to the physical scale of whichever topology is currently in use. Figure~\ref{fig:icw_diagram} shows the inference flow, and Figure~\ref{fig:operator_shift} shows the whitened-space alignment for the same 30-Bus N-2 contingency setting discussed in Section~\ref{subsec:operator_shift}.


\newcommand{\boxttl}[1]{{\scriptsize\textcolor{blue}{#1}}}
\newcommand{\boxeq}[1]{{\normalsize$#1$}}

\begin{figure}[!t]
\centering
\resizebox{\columnwidth}{!}{%
\begin{tikzpicture}[
    >=Stealth,
    every node/.style={font=\small},
    box/.style={draw, thick, rounded corners=2pt, align=center,
                minimum height=1.15cm},
    data/.style={box, fill=blue!8, draw=blue!35!black},
    basestat/.style={box, fill=blue!6, draw=blue!35!black},
    trainbox/.style={box, fill=violet!14, draw=violet!55!black},
    frozen/.style={box, fill=gray!12},
    ctx/.style={box, fill=orange!16, draw=orange!55!black},
    stat/.style={box, fill=orange!10, draw=orange!55!black},
    outbox/.style={box, fill=green!12, draw=green!45!black},
    arr/.style={->, thick},
    ctxsolid/.style={->, thick, orange!60!black},
    ctxarr/.style={->, thick, orange!60!black, densely dashed},
    deploy/.style={->, thick, violet!55!black, densely dashed},
    lbl/.style={font=\scriptsize, text=black!55},
]

\node[font=\normalsize\bfseries, anchor=west] at (-0.9, 5.35)
    {(a) Training: once, on the base topology};

\node[data, minimum width=2.1cm] (d1) at (0.3, 4.4)
    {\boxttl{Base data}\\[3pt]\boxeq{\{\mathbf{x}_i,\mathbf{y}_i\}_{i=1}^N}};

\node[basestat, minimum width=3.6cm] (s1) at (3.8, 4.4)
    {\boxttl{Fit whitener}\\[3pt]\boxeq{W_{\text{base}}=(\widehat{\Sigma}_{\text{base}}+\varepsilon I)^{-1/2}}};

\node[data, minimum width=3.1cm] (w1) at (7.8, 4.4)
    {\boxttl{Whiten targets}\\[3pt]\boxeq{\mathbf{z}_i=W_{\text{base}}(\mathbf{y}_i-\bar{\mathbf{y}}_{\text{base}})}};

\node[trainbox, minimum width=2.0cm] (t1) at (10.9, 4.4)
    {\boxttl{Train}\\[3pt]\boxeq{f_w:\mathbf{x}\mapsto\mathbf{z}}};

\draw[arr] (d1) -- (s1);
\draw[arr] (s1) -- (w1);
\draw[arr] (w1) -- (t1);

\node[font=\normalsize\bfseries, anchor=west] at (-0.9, 3.2)
    {(b) Inference: per new topology, gradient-free};

\node[data, minimum width=1.7cm] (xtest) at (0,1.0)
    {\boxttl{Query}\\[2pt]\boxeq{\mathbf{x}_{\text{test}}}};

\node[frozen, minimum width=2.5cm] (model) at (3.2, 1.0)
    {\boxttl{Surrogate (frozen)}\\[3pt]\boxeq{f_w}};

\node[ctx, minimum width=3.3cm] (decode) at (7.3, 1.0)
    {\boxttl{Invert}\\[3pt]\boxeq{\widehat{\mathbf{y}}=W_{\text{ctx}}^{-1}\widehat{\mathbf{z}}+\bar{\mathbf{y}}_{\text{ctx}}}};

\node[outbox, minimum width=1.7cm, minimum height=0.95cm] (ytest) at (10.9, 1.0)
    {\boxeq{\widehat{\mathbf{y}}_{\text{test}}}};

\draw[arr] (xtest) -- (model);
\draw[arr] (model) -- (decode) node[midway, above, lbl] {$\widehat{\mathbf{z}}$};
\draw[arr] (decode) -- (ytest);

\draw[deploy] (t1.south) -- ++(0,-1.35) -| (model.north)
    node[pos=0.30, fill=white, text=violet!55!black]
    {\normalsize deploy: $w$ frozen};

\node[ctx, minimum width=2.4cm] (cdata) at (1.0, -1.0)
    {\boxttl{Context}\\[2pt]\boxeq{\{(\mathbf{x}_i,\mathbf{y}_i)\}_{i=1}^{N_{\text{ctx}}}}};

\node[stat, minimum width=2.3cm] (cstat) at (4.3, -1.0)
    {\boxttl{Moments}\\[2pt]\boxeq{\bar{\mathbf{y}}_{\text{ctx}},\,\widehat{\Sigma}_{\text{ctx}}}};

\draw[ctxsolid] (cdata) -- (cstat);
\draw[ctxsolid] (cstat) -|  (decode) node[midway, right,text=black,yshift=0.5cm]
{\boxeq{W_{\text{ctx}}=(\widehat{\Sigma}_{\text{ctx}}+\varepsilon I)^{-1/2}}};

\end{tikzpicture}
}
\caption{In-Context Whitening (ICW). \textbf{(a)} The surrogate is trained once
on the base topology in the whitened output space 
and learn $f_w:\mathbf{x}\to\mathbf{z}$. \textbf{(b)} At inference the frozen $f_w$
predicts $\widehat{\mathbf{z}}$; a small context set from the new topology gives
$W_{\text{ctx}}$, and the inverse transform returns physical units.
Only the transform statistics adapt per topology and the surrogate weights never change and no gradient step is taken.}
\label{fig:icw_diagram}
\end{figure}

\begin{figure}[t]
  \centering
  \includegraphics[width=0.8\linewidth]{ 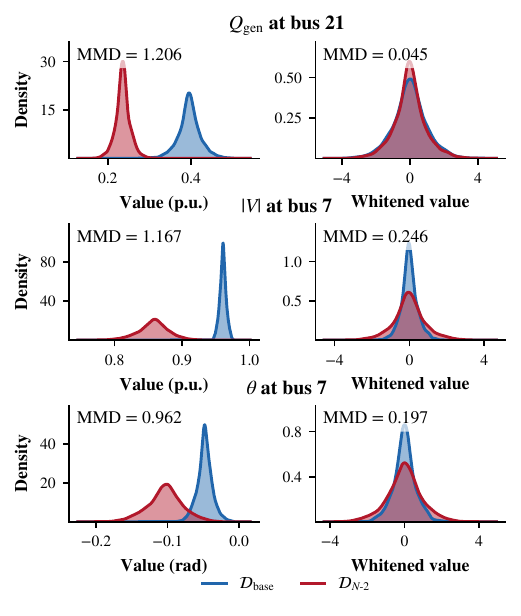}
\vspace{-1em}
\caption{Operator shift on 30-Bus under simultaneous outage of the two highest-loaded lines (N-2 Most contingency), $\pm 50\%$ load, 4k samples per topology. \emph{Left column:} raw outputs $\mathbf{y}$ for the three coordinates with the largest per-dimension MMD; $\mathcal{D}_{\mathrm{base}}$ (blue) and $\mathcal{D}_{\text{N-2}}$ (red) are nearly disjoint. \emph{Right column:} same coordinates after ICW; the distributions collapse, joint MMD drops $28.9\times$.}
\vspace{-1.3em}
\label{fig:operator_shift}
\end{figure}

\begin{remark}[MMD as a diagnostic distance]
Treating each topology as a separate domain places contingency generalization in a standard domain-adaptation setting. One can upper-bound the target-domain loss by a source-domain loss plus a distributional discrepancy term \cite{bendavid2010}:
\begin{equation}
  \mathcal{L}(f_w, \mathcal{D}_{N\text{-}k})
  \;\leq\;
  \mathcal{L}(f_w, \mathcal{D}_{\mathrm{base}})
  \;+\; \mathrm{MMD}_k
  \;+\; \lambda^{\star},
  \label{eq:bendavid}
\end{equation}
where $\mathrm{MMD}_k := \mathrm{MMD}(\mathcal{D}_{\mathrm{base}}, \mathcal{D}_{N\text{-}k})$ is the Maximum Mean Discrepancy (MMD) \cite{gretton2012} and $\lambda^{\star}$ is the minimum combined error achievable by any hypothesis on both domains. In this work, MMD is used only as an \emph{evaluation/diagnostic} measure of alignment quality; the proposed transformation is derived from computational and invertibility constraints together with moment-matching considerations, rather than by optimizing MMD. 
\end{remark}

\begin{remark}[ZCA versus z-score normalization]
\label{rem:zscore}
In \eqref{eq:opt_W}, replacing the full ZCA whitening matrix $\boldsymbol{\Sigma}_R^{-1/2}$ with the diagonal scaling matrix $\mathrm{diag}(\sigma_{R,1}^{-1},\ldots,\sigma_{R,d}^{-1})$ yields per-coordinate z-score normalization, where $\sigma_{R,i}$ denotes the standard deviation of the $i$th coordinate of $\mathbf{y}$ under $R$. This achieves zero mean and unit variance per dimension, but the covariance of the transformed variable is the correlation matrix of $\mathbf{y}$ under $R$, which is identity on the diagonal but has nonzero off-diagonal entries equal to the pairwise correlations. The whitening constraint~\eqref{eq:white_constraint} is therefore only partially satisfied: per-dimension variances are matched, but cross-coordinate covariances are not. Section~\ref{sec:experiments} confirms that z-score normalization yields a smaller MMD reduction than the proposed ZCA whitening.
\end{remark}

The proposed ICW is a nonsingular affine map, and invariants of nonsingular affine transformations, e.g. Mardia's multivariate skewness (a standardized third-order cumulant), are unchanged \cite{mardia1970}. Consequently, after ICW matches the first two moments, any remaining cross-topology mismatch in third and higher moments is not something an additional affine post-processing step can remove. Matrix $\mathbf{W}$ is already determined by the whitening constraint~\eqref{eq:white_constraint} and by the requirement of an explicit inverse used in~\eqref{eq:icw_inversion}. Correcting higher-order discrepancies would therefore require a topology-specific nonlinear map, which would sacrifice closed-form invertibility and substantially increase the number of context samples needed to estimate it reliably. For this reason, and consistent with the diagnostic MMD results and 
ablations in Appendix~\ref{app:two_moments}, we focus on mean--covariance matching as the simplest test-time adaptation that is both computationally cheap and sufficient for the accuracy gains that constitute the paper's main contribution.

\section{Numerical Results and Discussions}
\label{sec:experiments}
This section evaluates ICW under topology changes, surrogate architectures, network sizes, contingency severities, context sizes, and load-perturbation levels. Samples are generated by Newton--Raphson on the corresponding topology. For each load bus $j$, active and reactive demands are perturbed independently as $p_{l,j} = (1 + u_{p,j})\,p_{l,j}^0$ and $q_{l,j} = (1 + u_{q,j})\,q_{l,j}^0$, with $u_{p,j}, u_{q,j} \stackrel{\text{i.i.d.}}{\sim} \mathcal{U}[-\delta, \delta]$; stacking over load buses yields $\mathbf{u}_p,\mathbf{u}_q$ and $\mathbf{p}_l,\mathbf{q}_l$. We report $\delta\in\{0.2,0.5\}$ ($20\%$ or $50\%$), with $\delta=0.2$ as the default. All other inputs in $\mathbf{x}$ are set to the base case. Each test condition uses $1{,}000$ held-out samples, and training uses $4{,}000$--$15{,}000$ base-topology samples depending on network size.

Training mixes two regimes: $50\%$ of samples use fixed $\delta=0.20$, and $50\%$ draw per-sample $\delta \stackrel{\text{i.i.d.}}{\sim} \mathcal{U}[0.05,0.20]$. Context samples match training. Test samples use only the mixed regime ($\delta \sim \mathcal{U}[0.05,0.20]$), reflecting deployment where severity is unknown a priori. Unless stated otherwise, contingency severity is defined by two standardized cases. \emph{N-1 Most} removes the single most heavily loaded line, identified by ranking lines by average loading over $1{,}000$ load scenarios sampled on the base topology; \emph{N-2 Most} removes the two most heavily loaded lines simultaneously. Both are out-of-distribution topologies not seen in training. A line is selected only if its removal keeps the network connected and the resulting topology remains Newton--Raphson convergent at the nominal operating point.

We evaluate three surrogate backbones. An MLP, a Transformer, and PowerFlowNet \cite{lin2024} (a message-passing GNN in per-node prediction space, denoted as PFGNN hereafter). The MLP and Transformer are trained on the flat $d$-dimensional whitened output $\mathbf{z}$; at inference, the frozen model predicts in whitened space and contextual statistics map predictions back to physical scale via~\eqref{eq:icw_inversion}. PFGNN follows \cite{lin2024} without architectural changes; only the data pipeline is modified to incorporate ICW: ZCA whitening is applied to the flat 60-dimensional output before per-node conversion during training, and per-node predictions are converted back to flat space before inverse whitening at inference. ICW integrates with all three backbones without changing model weights or architecture; full specifications and hyperparameters are in Appendix~\ref{app:impl}. To isolate the impact of moment matching, we compare seven preprocessing variants (Table~\ref{tab:preprocessing_configs}): \emph{No Norm}, \emph{Residual}, \emph{Z-Score}, and \emph{ZCA}, each with frozen or context-adaptive statistics when applicable. The proposed ICW method is \emph{ZCA(ctx)}. Performance is measured by mean absolute error (MAE) in per unit on $1{,}000$ held-out test samples. \emph{Overall MAE} is the MAE averaged uniformly across all output quantities $\mathbf{y}$ as defined in Section~\ref{sec:background}.

\begin{table}[t]
\centering
\caption{Evaluated pre-processing configurations }
\label{tab:preprocessing_configs}
\renewcommand{\arraystretch}{1.05}
\begin{tabular}{l|cc}
\hline
\textbf{Moments matched} & \textbf{Frozen stats} & \textbf{Context stats} \\
\hline
None ($\mathbf{W} = \mathbf{I}$, no centring)         & \multicolumn{2}{c}{\emph{No Norm}}            \\
\hline
1st only (mean)                                       & \emph{Residual} & \emph{Residual(ctx)}        \\
1st + per-dim 2nd (variance)                 & \emph{Z-Score Frozen}  & \emph{Z-Score(ctx)}  \\
1st + full 2nd (mean + covariance)               & \emph{ZCA Frozen}      & \emph{\textbf{Proposed ICW}} \\
\hline
\end{tabular}
\end{table}


\begin{table}[h]
\centering
\vspace{-4pt}
\footnotesize
\caption{Overall MAE (p.u.), 30-Bus, 20\% mixed load, 1k test samples. \textbf{Bold} = best per Architecture. \textit{Italics} = context-adaptive (400 ctx samples).}
\label{tab:exp1_arch_invariant}
\small
\begin{tabular}{@{}l@{\hskip -2pt}c@{\hskip 4pt}c@{\hskip 5pt}c@{\hskip 4pt}c@{}}
\toprule
\textbf{Variant} & \textbf{Infer. Time} & \textbf{NR} & \textbf{N2-Most} & \textbf{N1-Most} \\
\cmidrule(lr){3-5}
 & & \textit{Base} & \multicolumn{2}{c}{\textit{Contingency (OOD)}} \\
\midrule
\multicolumn{5}{l}{\textbf{MLP}} \\
No Norm & 0.03s & $2.60 \cdot 10^{-4}$ & $1.552 \times 10^{-2}$ & $5.98 \times 10^{-3}$ \\
Z-Score Frozen & 0.03s & $1.00 \times 10^{-4}$ & $1.554 \times 10^{-2}$ & $5.89 \times 10^{-3}$ \\
Residual & 0.03s & $1.10 \times 10^{-4}$ & $1.554 \times 10^{-2}$ & $5.90 \times 10^{-3}$ \\
ZCA Frozen & 0.03s & $\mathbf{6.00 \times 10^{-5}}$ & $1.554 \times 10^{-2}$ & $5.90 \times 10^{-3}$ \\
\textit{Z-Score(ctx)} & 0.09s & $2.90 \times 10^{-4}$ & $5.90 \times 10^{-4}$ & $4.70 \times 10^{-4}$ \\
\textit{Residual(ctx)} & 0.09s & $2.90 \times 10^{-4}$ & $8.20 \times 10^{-4}$ & $6.30 \times 10^{-4}$ \\
\textit{ICW} & 0.09s & $3.10 \times 10^{-4}$ & $\mathbf{5.60 \times 10^{-4}}$ & $\mathbf{4.60 \times 10^{-4}}$ \\
\midrule
\multicolumn{5}{l}{\textbf{Transformer}} \\
No Norm & 0.04s & $5.80 \times 10^{-4}$ & $1.559 \times 10^{-2}$ & $6.00 \times 10^{-3}$ \\
Z-Score Frozen & 0.04s & $3.10 \times 10^{-4}$ & $1.543 \times 10^{-2}$ & $5.72 \times 10^{-3}$ \\
Residual & 0.04s & $4.40 \times 10^{-4}$ & $1.556 \times 10^{-2}$ & $5.95 \times 10^{-3}$ \\
ZCA Frozen & 0.04s & $\mathbf{1.70 \times 10^{-4}}$ & $1.554 \times 10^{-2}$ & $5.89 \times 10^{-3}$ \\
\textit{Z-Score(ctx)} & 0.13s & $2.90 \times 10^{-4}$ & $6.00 \times 10^{-4}$ & $4.80 \times 10^{-4}$ \\
\textit{Residual(ctx)} & 0.13s & $5.10 \times 10^{-4}$ & $9.00 \times 10^{-4}$ & $7.40 \times 10^{-4}$ \\
\textit{ICW} & 0.13s & $3.30 \times 10^{-4}$ & $\mathbf{5.60 \times 10^{-4}}$ & $\mathbf{4.70 \times 10^{-4}}$ \\
\midrule
\multicolumn{5}{l}{\textbf{PFGNN}} \\
No Norm & 0.11s & $1.38 \times 10^{-3}$ & $2.614 \times 10^{-2}$ & $7.98 \times 10^{-3}$ \\
Z-Score Frozen & 0.11s & $\mathbf{5.00 \times 10^{-4}}$ & $1.569 \times 10^{-2}$ & $6.05 \times 10^{-3}$ \\
Residual & 0.12s & $1.16 \times 10^{-3}$ & $1.569 \times 10^{-2}$ & $6.02 \times 10^{-3}$ \\
ZCA Frozen & 0.11s & $5.40 \times 10^{-4}$ & $1.600 \times 10^{-2}$ & $6.25 \times 10^{-3}$ \\
\textit{Z-Score(ctx)} & 0.29s & $\mathbf{5.00 \times 10^{-4}}$ & $\mathbf{1.85 \times 10^{-3}}$ & $\mathbf{1.36 \times 10^{-3}}$ \\
\textit{Residual(ctx)} & 0.29s & $1.08 \times 10^{-3}$ & $2.29 \times 10^{-3}$ & $1.93 \times 10^{-3}$ \\
\textit{ICW} & 0.29s & $6.10 \times 10^{-4}$ & $2.04 \times 10^{-3}$ & $1.55 \times 10^{-3}$ \\
\bottomrule
\end{tabular}
\end{table}

\begin{figure*}[t]
    \centering
    \includegraphics[width=\linewidth]{ 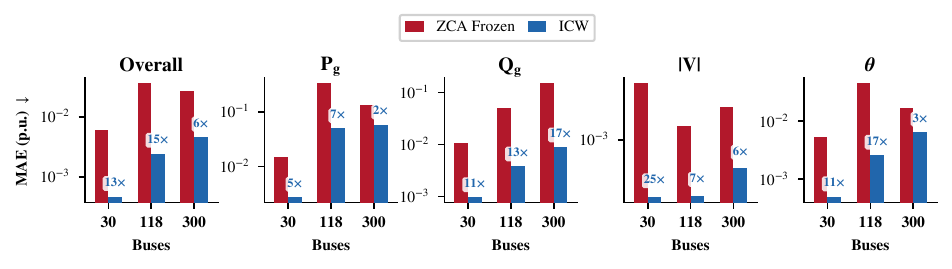}
    \caption{Per-metric MAE (p.u.) for ZCA Frozen vs. ICW, under N-1 Most-Severe contingencies for IEEE bus case 30, 118, and 300 networks. Annotated numbers denote the ZCA/ICW improvement factor.}
  \label{fig:ff_fig4a_n1}
  \end{figure*}

\subsection{Performance Gains by In-Context Whitening}
Table~\ref{tab:exp1_arch_invariant} reports Overall MAE on the 30-bus system under the base, N-1 Most, and N-2 Most conditions across the three backbones. The pattern is a clean split between the frozen and context-adaptive families. Every frozen variant, regardless of normalization, saturates at an N-2 Most floor of roughly $1.5\times10^{-2}$~p.u.: fixing the statistics at training time removes the base topology's moments, not those of the contingency distribution the model is evaluated on. Re-estimating the same statistics from a context set collapses this floor by one to two orders of magnitude on all three backbones, confirming that the shift is carried by the low-order output moments and is correctable at the output level with no weight change. This costs only a modest inference-time increase (MLP: $0.03$\,s to $0.09$\,s), since the statistics are computed once per topology and reused across queries.

Among the adaptive variants, ICW attains the lowest contingency MAE for two of the three backbones at zero gradient cost, reaching $5.60\times10^{-4}$~p.u. (N-2 Most) and $4.60\times10^{-4}$~p.u. (N-1 Most) on the MLP, a $28\times$ and $13\times$ reduction over the ZCA Frozen baseline. The PFGNN is the lone exception, examined in Section~\ref{sec:method-comparison}. Notably, the corrected MLP matches or beats the Transformer and graph-aware PFGNN, indicating the accuracy lost under a topology change is a distributional-alignment problem rather than a capacity one.

Figure~\ref{fig:ff_fig4a_n1} extends this to the 118- and 300-bus systems under N-1 Most at $N_{\mathrm{ctx}}=400$ and $4000$ respectively. ICW reduces Overall MAE by $13\times$, $15\times$, and $6\times$ across the three systems, with gains in $|V|$ by $25\times$ on 30-Bus, $Q_g$ by $17\times$ on 300-Bus. The improvement holds at every system size and across all four physical quantities, so it is the whitening of the full output vector, not any single block, that recovers accuracy.

\subsection{Comparison Against Other Adaptation Strategies}
\label{sec:method-comparison}
Having established that any context-adaptive variant beats its frozen counterpart, we compare ICW against the other adaptive strategies in Table~\ref{tab:exp1_arch_invariant} to isolate what full-covariance matching buys over the cheaper partial corrections. The three differ only in how much of the second moment they match: Residual(ctx) removes the contextual mean, Z-Score(ctx) adds per-dimension variance, and ICW matches the full covariance including cross-coordinate correlations.

Each added moment lowers the error. On the MLP under N-2 Most, Residual(ctx), Z-Score(ctx), and ICW reach $8.20$, $5.90$, and $5.60\times10^{-4}$~p.u. respectively; the same ordering holds on the Transformer. The largest step is from mean-only to variance, with the full-covariance step giving a further but smaller gain, exactly as the two-moment argument predicts: most of the shift lives in the mean and per-dimension variance, and the correlations only ZCA captures account for the remaining portion.

The one reversal is on the PFGNN, where Z-Score(ctx) edges ICW ($1.85\times10^{-3}$ vs. $2.04\times10^{-3}$~p.u. under N-2 Most). This suits its per-node structure: a diagonal estimator matches variance without mixing across nodes, whereas a full-covariance estimator must recover cross-node correlations from only $N_{\mathrm{ctx}}=400$ samples. On the three flat-output backbones, matching the full covariance is at least as good as the diagonal approximation and is the safe default. All three variants share the same gradient-free statistics, so this comparison carries no difference in compute; the larger comparison against gradient-based methods is deferred to Section~\ref{sec:cost-retraining}.

\begin{table*}[t]
\centering
\caption{Baseline MAE comparison on IEEE 30-Bus and IEEE case118 (side-by-side). Context-using methods use a 400-sample context set from the target topology.}
\label{tab:30-118-sidebyside}
\footnotesize
\setlength{\tabcolsep}{2.6pt} 
\renewcommand{\arraystretch}{1.05}
\begin{tabular}{lccc ccc|ccc ccc}
\toprule
& \multicolumn{6}{c}{IEEE 30-Bus} & \multicolumn{6}{c}{IEEE case118} \\
\cmidrule(lr){2-7}\cmidrule(lr){8-13}
& \multicolumn{3}{c}{$|V|$ MAE (p.u.)} & \multicolumn{3}{c}{$\theta$ MAE (deg)}
& \multicolumn{3}{c}{$|V|$ MAE (p.u.)} & \multicolumn{3}{c}{$\theta$ MAE (deg)}  \\
\cmidrule(lr){2-4}\cmidrule(lr){5-7}\cmidrule(lr){8-10}\cmidrule(lr){11-13}
Method & Base & N-1 Most & N-2 Most & Base & N-1 Most & N-2 Most & Base & N-1 Most & N-2 Most & Base & N-1 Most & N-2 Most \\
\midrule
\multicolumn{13}{l}{\textit{Without context samples (zero-shot)}}\\
Modular-GP  \cite{caetano2026modulargp} & 1.1{\scriptsize$\times10^{-4}$} & 5.6{\scriptsize$\times10^{-3}$} & 1.6{\scriptsize$\times10^{-2}$} & 7.7{\scriptsize$\times10^{-2}$} & 5.8{\scriptsize$\times10^{-1}$} & 1.5{\scriptsize$\times10^{0}$}
& 9.2{\scriptsize$\times10^{-5}$} & 1.5{\scriptsize$\times10^{-3}$} & 1.8{\scriptsize$\times10^{-3}$} & 2.6{\scriptsize$\times10^{-1}$} & 2.9{\scriptsize$\times10^{0}$} & 3.7{\scriptsize$\times10^{0}$} \\
DeepOPF-FT \cite{zhou2022deepopfft} & 1.6{\scriptsize$\times10^{-5}$} & 5.3{\scriptsize$\times10^{-3}$} & 1.5{\scriptsize$\times10^{-2}$} & 5.0{\scriptsize$\times10^{-3}$} & 2.7{\scriptsize$\times10^{-1}$} & 4.9{\scriptsize$\times10^{-1}$}
& 6.1{\scriptsize$\times10^{-5}$} & 1.5{\scriptsize$\times10^{-3}$} & 1.7{\scriptsize$\times10^{-3}$} & 3.1{\scriptsize$\times10^{-2}$} & 2.6{\scriptsize$\times10^{0}$} & 3.2{\scriptsize$\times10^{0}$} \\
PFGNN  \cite{lin2024} & 6.4{\scriptsize$\times10^{-5}$} & 5.0{\scriptsize$\times10^{-3}$} & 1.5{\scriptsize$\times10^{-2}$} & 2.5{\scriptsize$\times10^{-2}$} & 3.0{\scriptsize$\times10^{-1}$} & 3.2{\scriptsize$\times10^{-1}$}
& 7.4{\scriptsize$\times10^{-5}$} & 1.4{\scriptsize$\times10^{-3}$} & 1.6{\scriptsize$\times10^{-3}$} & 2.6{\scriptsize$\times10^{-1}$} & 2.6{\scriptsize$\times10^{0}$} & 3.2{\scriptsize$\times10^{0}$} \\
\midrule
\multicolumn{13}{l}{\textit{With context samples}}\\
MMNP \cite{ly2025} & 3.4{\scriptsize$\times10^{-4}$} & 4.7{\scriptsize$\times10^{-4}$} & 5.9{\scriptsize$\times10^{-4}$} & 1.9{\scriptsize$\times10^{-2}$} & 2.9{\scriptsize$\times10^{-2}$} & 3.2{\scriptsize$\times10^{-2}$}
& 4.0{\scriptsize$\times10^{-4}$} & 4.2{\scriptsize$\times10^{-4}$} & 4.2{\scriptsize$\times10^{-4}$} & 4.1{\scriptsize$\times10^{-1}$} & 4.3{\scriptsize$\times10^{-1}$} & 4.3{\scriptsize$\times10^{-1}$} \\
TTF-Ensemble & 1.6{\scriptsize$\times10^{-4}$} & 1.8{\scriptsize$\times10^{-4}$} & 2.4{\scriptsize$\times10^{-4}$} & 2.4{\scriptsize$\times10^{-2}$} & 2.7{\scriptsize$\times10^{-2}$} & 2.7{\scriptsize$\times10^{-2}$}
& 4.0{\scriptsize$\times10^{-4}$} & 4.1{\scriptsize$\times10^{-4}$} & 4.1{\scriptsize$\times10^{-4}$} & 8.4{\scriptsize$\times10^{-1}$} & 8.5{\scriptsize$\times10^{-1}$} & 8.5{\scriptsize$\times10^{-1}$} \\
Proposed (ICW) & 7.3{\scriptsize$\times10^{-5}$} & 2.0{\scriptsize$\times10^{-4}$} & 3.2{\scriptsize$\times10^{-4}$} & 2.0{\scriptsize$\times10^{-2}$} & 2.8{\scriptsize$\times10^{-2}$} & 3.1{\scriptsize$\times10^{-2}$}
& 2.0{\scriptsize$\times10^{-4}$} & 2.1{\scriptsize$\times10^{-4}$} & 2.1{\scriptsize$\times10^{-4}$} & 1.3{\scriptsize$\times10^{-1}$} & 1.5{\scriptsize$\times10^{-1}$} & 1.5{\scriptsize$\times10^{-1}$} \\
\bottomrule
\end{tabular}
\end{table*}

\subsection{Comparison Against Existing Methods}\label{sec:sota-comparison}

Table~\ref{tab:30-118-sidebyside} places ICW against five recent methods on IEEE 30-Bus and case118, split into zero-shot methods that receive no target-topology data (Modular-GP~\cite{caetano2026modulargp}, DeepOPF-FT~\cite{zhou2022deepopfft}, PFGNN~\cite{lin2024}) and context-using methods given a 400-sample context from the target topology (MMNP~\cite{ly2025}, TTF-Ensemble~\cite{jia2024}, and ICW). The two context baselines adapt differently: MMNP is trained on the base topology and conditions on the context purely at inference, taking no gradient step as ICW does, whereas TTF-Ensemble fits a base MLP and then adapts it to each topology with gradient updates on the same context set. The zero-shot methods reproduce the operator-shift collapse of Section~\ref{subsec:operator_shift}: on 30-Bus, PFGNN's $|V|$ error rises from $6.4\times10^{-5}$ to $1.5\times10^{-2}$~p.u. between base and N-2 Most, and on case118 all three exceed $2.6$~deg of angle error under contingency. Every context-using method instead holds its base-case accuracy across N-1 and N-2, confirming that a target-topology context removes the shift regardless of how that context is consumed.

Among the context methods, ICW is the most accurate on case118, roughly $2\times$ below both MMNP and TTF-Ensemble on $|V|$ ($2.1\times10^{-4}$ vs. $4.2$ and $4.1\times10^{-4}$~p.u. under N-2 Most) and $3$ to $6\times$ below them on $\theta$. On 30-Bus it gives the lowest base $|V|$ error ($7.3\times10^{-5}$~p.u.) and stays within the same band as the others under contingency, with TTF-Ensemble marginally lower on the most severe cases ($2.4\times10^{-4}$ vs. $3.2\times10^{-4}$~p.u. on N-2 Most $|V|$). The point is not unconditional accuracy but where ICW sits in cost: it equals or improves on MMNP, which is also gradient-free, and it matches gradient-adapted TTF-Ensemble while paying only the context-encoding cost that TTF-Ensemble incurs on top of its per-topology gradient updates. No competing method is simultaneously as accurate and as cheap to adapt, the cost-accuracy position quantified in Section~\ref{sec:cost-retraining}.

\begin{figure}[t]
    \centering
    \includegraphics[width=0.9\linewidth]{ 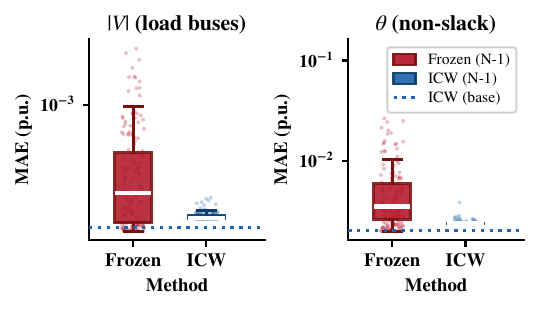}
    \vspace{-1em}
    \caption{Consistency across all 166 N-1 line outages on case118 (20\% load, $N_{\mathrm{ctx}}=400$). Box plots of per-contingency MAE for ZCA Frozen and ICW predictions, shown for $|V|$ at load buses and $\theta$ at non-slack buses; boxes give the median and interquartile range, faint dots individual contingencies, and the dotted line marks ICW's MAE on the base (no-outage) topology as a reference floor. ICW stays near its base-case accuracy across the full set, while the frozen baseline shows wide spread and heavy upper tails under unseen topologies.}
      \label{fig:consistency}
\end{figure}

\subsection{Robustness Across the Full Contingency Set}
\label{sec:full-contingency}
The results so far report two representative contingencies, N-1 Most and N-2 Most. We now check whether the improvement holds across every single-line outage, not only the most severe. Figure \ref{fig:consistency} repeats this at scale on case118 across all 166 N-1 line outages, reported separately for $|V|$ at load buses and $\theta$ at non-slack buses. ICW holds close to its own base-case accuracy over the entire set: its per-contingency distribution stays a tight band near the no-outage value (dotted line) for both quantities. The frozen baseline instead spreads widely and develops a heavy upper tail, with individual contingencies degrading by an order of magnitude or more. The separation is uniform across the full set rather than driven by a few outages, confirming that the operator-shift correction is a property of the whitening transform and not of any particular topology. Physical validity across this same set, that the corrected predictions also satisfy the power-flow equations more closely, is verified in Appendix~\ref{app:physical-validity}. 


\subsection{How Much Context Is Enough}\label{sec:context-size}
The context size $N_{\mathrm{ctx}}$ is the only deployment-time choice ICW requires, and it trades accuracy against the NR-solve cost of collecting the context. Figure~\ref{ff_fig6_c118_stacked} makes this tradeoff explicit on case118, plotting Overall MAE and context-collection wall time on shared axes against $N_{\mathrm{ctx}}$, each point averaged over five independently drawn context sets. MAE falls with more context but flattens quickly, while cost rises steadily, so the two curves together locate where added context stops paying. The lower panel tracks the Log-Euclidean distance $\|\log\boldsymbol{\Sigma}(N)-\log\boldsymbol{\Sigma}(N_{\max})\|_F$ between the context covariance and a converged reference at $N_{\max}=2000$; since the ZCA whitener $\mathbf{W}=\mathbf{U}\bm{\Lambda}^{-1/2}\mathbf{U}^\top$ depends only on this covariance, the distance reaching zero marks the point beyond which additional context carries no new geometric information. At $N_{\mathrm{ctx}}=400$ (dashed line) the covariance distance has already fallen to under half its value at $N=100$ and the MAE curve has nearly flattened, at roughly one-fifth the cost of $N_{\mathrm{ctx}}=2000$. This captures about two-thirds of the achievable MAE improvement on case118; the remaining gain sits in the tail eigenvalues, which its $d=236$ output dimension needs more samples to settle. We therefore use $N_{\mathrm{ctx}}=400$ throughout, a reasonable operating point rather than a convergence requirement, and reserve the eigenvalue-level detail for Appendix~\ref{app:covariance-convergence}.

\begin{figure}[t]
   \centering
   \includegraphics[width=\linewidth]
   { 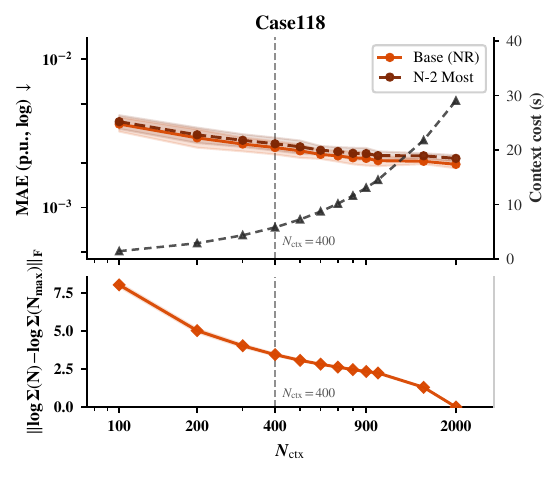}
    \caption{\textbf{Case118: accuracy-cost tradeoff with context size.} \emph{Top:} Overall MAE (log scale) on Base and N-2 Most vs $N_{\mathrm{ctx}}$, with context-collection time on the right axis; 5-draw mean, $\pm1$ std band. \emph{Bottom:} Log-Euclidean covariance distance from the converged reference at $N_{\max}=2000$. Dashed line marks the $N_{\mathrm{ctx}}=400$ operating point.}
    \label{ff_fig6_c118_stacked}
\end{figure}

\subsection{Cost Relative to Retraining}
\label{sec:cost-retraining}

The value of ICW is not raw accuracy but where that accuracy sits in cost. We measure the cost of adapting to a new topology as $T_{\mathrm{data}}+T_{\mathrm{grad}}+T_{\mathrm{inf}}$, where $T_{\mathrm{data}}$ is the NR-solve time to collect the context set, $T_{\mathrm{grad}}$ is the time spent on gradient steps, and $T_{\mathrm{inf}}$ is the forward-pass cost at test time. ICW and gradient-based adaptation differ in a single term: ICW computes $\boldsymbol{\mu}_R$ and $\boldsymbol{\Sigma}_R$ from the context set and takes no gradient step, so $T_{\mathrm{grad}}=0$ and its cost is $T_{\mathrm{data}}+T_{\mathrm{inf}}$, whereas fine-tuning or scratch training adds a $T_{\mathrm{grad}}$ that grows with the number of steps and is re-paid for every new outage. This is the cost structure of Figure~\ref{fig:cost-position}: ICW sits at the bottom-left of the cost-accuracy plane, while gradient-based methods trace a curve descending from the frozen error as $T_{\mathrm{grad}}$ increases. The per-contingency fine-tuning and scratch-training sweeps behind this claim are reported in Appendix~\ref{app:cost-detail}; here we establish the position on two systems and then at deployment scale.

Figure~\ref{fig:300-Bus_cost} makes the single-contingency comparison explicit on 300-Bus under N-1 Most, using a base-trained MLP adapted three ways. The frozen baseline sits at $2.74\times10^{-3}$~p.u.\ Voltage MAE, confirming the same operator shift seen on smaller systems. ICW reduces this to $4.22$, $4.25$, and $3.94\times10^{-4}$~p.u.\ at $N_{\mathrm{ctx}}\in\{400,1000,2000\}$, at costs of $0.18$, $0.44$, and $0.88$~s, with no gradient steps. Fine-tuning at $4000$ steps reaches only $1.08\times10^{-3}$, $7.18\times10^{-4}$, and $5.33\times10^{-4}$~p.u.\ at those context sizes, $2.55\times$, $1.69\times$, and $1.35\times$ worse than ICW while costing $5$ to $21\times$ more. ICW therefore lies strictly below and to the left of every fine-tuning curve: it is both more accurate and cheaper at every context size tested.

\begin{figure}[t]
\centering
\includegraphics[width=\linewidth]{ 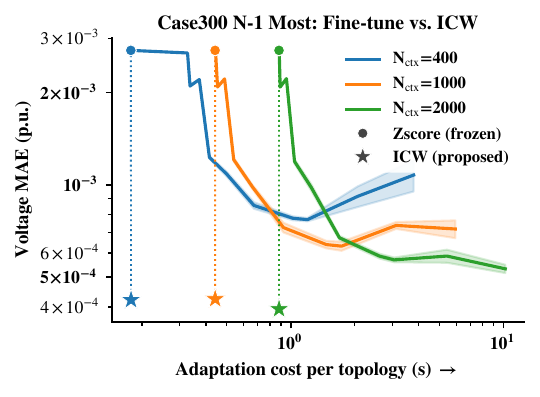}
\caption{Voltage MAE vs.\ per-topology adaptation cost for 300-Bus N-1 Most ($N_\mathrm{ctx}\in\{400,1000,2000\}$, MLP backbone, 3 seeds, $\pm1\sigma$ band). Stars: ICW; dots: frozen baseline at ICW cost; curves: fine-tuning sweep through gradient-step milestones. ICW sits strictly below and to the left of every fine-tuning curve.}
\label{fig:300-Bus_cost}
\end{figure}

This advantage is not confined to easy contingencies. From the $166$ N-1 contingencies on case118 we select $30$ representative ones, $10$ from each of three difficulty terciles ranked by the frozen model's Voltage MAE, and fine-tune the base MLP on each at $N_{\mathrm{ctx}}=400$. For every contingency and step count we form the ratio $r=\mathrm{MAE}_{\mathrm{ICW}}/\mathrm{MAE}_{\mathrm{FT}}$, so $r<1$ means ICW is more accurate than fine-tuning at that cost point. Figure~\ref{fig:case118_band} shows all $30$ curves starting at the frozen parity line $r=1$ and dropping below it as fine-tuning trains: at $4000$ steps the median settles at $r=0.83$, ICW achieving roughly $1.2\times$ lower MAE at $22\times$ lower cost. The margin is nearly identical across the low, mid, and high terciles ($1.20\times$, $1.22\times$, $1.21\times$), so the cost advantage holds uniformly across contingency difficulty.

\begin{figure}[t]
\centering
\includegraphics[width=\linewidth]{ 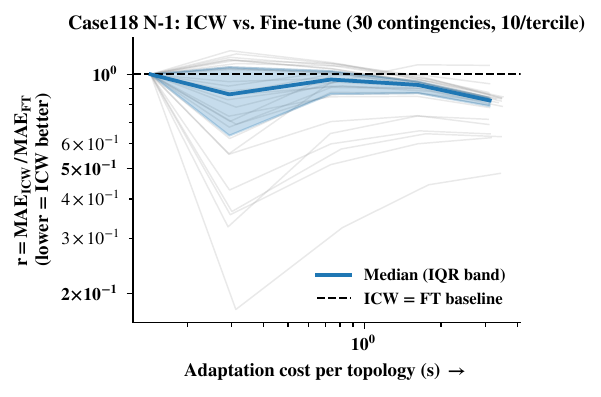}
\caption{\textbf{Case118 N-1: ICW vs.\ fine-tuning across contingency difficulty.} Normalized error ratio $r=\mathrm{MAE}_\mathrm{ICW}/\mathrm{MAE}_\mathrm{FT}$, Overall MAE across $30$ representative N-1 contingencies. Grey lines denote individual contingencies; the blue line and shaded band show the median and IQR. Values below the dashed parity line ($r=1$) indicate lower MAE for ICW. At $4000$ fine-tuning steps the median reaches $r=0.83$ ($1.2\times$ lower MAE), consistently across low-, mid-, and high-difficulty contingencies.}
\label{fig:case118_band}
\end{figure}

The two comparisons above cost a single contingency; in deployment the full N-1 set must be handled. Excluding outages that island the network, this is $K=38$ for 30-Bus, $K=166$ for case118, and $K=253$ for 300-Bus. Since $T_{\mathrm{grad}}=0$ for ICW, its total is $T^{\mathrm{ICW}}_{\mathrm{total}}=K\,(T^{\parallel}_{\mathrm{data}}+T^{\mathrm{ICW}}_{\mathrm{inf}})$, against $T^{\mathrm{FT}}_{\mathrm{total}}=K\,(T^{\parallel}_{\mathrm{data}}+T^{4\mathrm{k}}_{\mathrm{grad}}+T^{\mathrm{FT}}_{\mathrm{inf}})$ for fine-tuning. Table~\ref{tab:sweep_cost_parallel_48} reports both under $48$ parallel CPU cores, where $T^{\parallel}_{\mathrm{data}}$ is the NR-solver time with the context solves distributed across cores, $T^{\mathrm{ICW}}_{\mathrm{inf}}$ and $T^{\mathrm{FT}}_{\mathrm{inf}}$ are the per-contingency inference costs of ICW and fine-tuning respectively, and $T^{4\mathrm{k}}_{\mathrm{grad}}$ is fine-tuning's gradient-step cost at the 4000-step budget used throughout this comparison. ICW completes the full sweep in $4$ to $45$~s across the three systems, against $137$ to $957$~s for fine-tuning at $4000$ steps, a $21$ to $34\times$ speedup. The gap is structural, not just a constant factor: the NR solves that dominate ICW's cost are independent and parallelize across commodity CPU cores, whereas matching that parallelism for fine-tuning would require one GPU per contingency running simultaneously, up to $253$ GPUs for 300-Bus. A single surrogate trained once on the base topology thus adapts to any contingency in the set from context alone, with no retraining, no gradient steps, and an adaptation cost that parallelizes trivially on hardware already present in an operations centre.

\begin{table}[t]
  \centering
  \caption{Total cost of adapting to all N-1 contingencies ($N_\text{ctx}=400$, MLP backbone),
           using \textbf{48 parallel cores} for ACPF context generation.
           $T_\text{data}^{\parallel}$ is the NR solver time with 48 cores running simultaneously
           (serial solver time / 48; process overhead excluded).
           Speedup $= T^\text{FT}_\text{total}/T^\text{ICW}_\text{total}$.}
  \label{tab:sweep_cost_parallel_48}
  \setlength{\tabcolsep}{3.5pt}
  \begin{tabular}{@{}lcrrrrrrr@{}}
    \toprule
    System  & $K$
            & $T_\text{data}^{\parallel}$
            & $T_\text{inf}^\text{ICW}$
            & $T_\text{inf}^\text{FT}$
            & $T_\text{grad}^{4\text{k}}$
            & $T^\text{ICW}_\text{total}$
            & $T^\text{FT}_\text{total}$
            & Speedup \\
            & & (s/ctg) & (s/ctg) & (s/ctg) & (s/ctg) & (s) & (s) & \\
    \midrule
    30-Bus  &  38 & 0.103 & 0.000 & 0.000 & 3.500 &     4 &   137 & $34.3\times$ \\
    118-Bus & 166 & 0.143 & 0.001 & 0.001 & 3.072 &    24 &   534 & $22.3\times$ \\
    300-Bus & 253 & 0.176 & 0.002 & 0.001 & 3.606 &    45 &   957 & $21.3\times$ \\
    \bottomrule
  \end{tabular}
\end{table}

\section{Conclusions}\label{sec:conclusion}
This paper recast contingency adaptation as statistical adaptation: a surrogate trained once on the base topology in an output space free of topology-specific structure, with the contingency-specific statistics reinstated at inference from a small context set. We proposed In-Context Whitening (ICW), an affine transform that matches the first two moments of the output distribution, the most an efficient invertible transform can match, estimating those moments from a few hundred solved NR samples per topology rather than fixing them at training time. Among affine whiteners it is the unique choice preserving the coordinate-wise semantics of the physical output vector (Proposition~\ref{prop:zca}). Across the IEEE 30-, 118-, and 300-bus systems under N-1 and N-2 contingencies, ICW recovers most of the accuracy lost to the operator shift, reducing overall error by 6$\times$ to 28$\times$ over the frozen surrogate and up to 54$\times$ on individual quantities under N-2, and cutting worst-bus power-balance mismatch by up to 30$\times$, with 
consistent gains across three backbones. The residual gap after two-moment matching falls to the finite-sample noise floor, and a lossless third-moment correction yields no further gain, so two moments are empirically sufficient. Against gradient-based adaptation ICW reaches comparable accuracy at only context-collection cost, a $21\times$ to $34\times$ deployment-scale speedup that widens because context collection parallelizes on commodity CPU cores while per-contingency fine-tuning does not. These gains rest on the shift being concentrated in the first two output moments; a discrepancy living in higher moments would fall outside an affine correction, and adaptation still requires NR solves on the new topology, so ICW is gradient-free rather than data-free. Extending output-space whitening to other operator shifts, such as load-regime changes and switching actions, and to larger systems where the covariance tail settles more slowly, is left to future work.

\bibliographystyle{IEEEtran}
\bibliography{main}

\newpage

\newpage
\appendices

\section{Empirical Validation: Two Moments Suffice}
\label{app:two_moments}

This appendix provides the empirical evidence supporting the claim that matching the first two moments is both necessary and sufficient for ICW's accuracy gains, complementing the theoretical argument of Section~\ref{sec:icw}.

\textbf{MMD reduction.} Figure~\ref{fig:two_moments_mmd} tracks the MMD between $\mathcal{D}_{\mathrm{base}}$ and $\mathcal{D}_{N\text{-}2}$ as moments are matched in sequence on the N-2 Most contingency of 30-Bus under $20\%$ load perturbation. The MMD falls from $0.923$ (no matching) to $0.045$ (mean only) to $0.029$ (mean and covariance, i.e.\ ICW). This last value already lies at the sampling noise floor $1/\sqrt{N}\approx0.032$ with 1000 evaluation samples, so the residual cross-topology gap after two-moment matching is statistically indistinguishable from finite-sample noise. By the bound in~\eqref{eq:bendavid}, no further reduction in $\mathrm{MMD}_k$ is achievable with the available samples, and no additional accuracy gain is expected from matching higher moments.

\textbf{MAE saturation.} Figure~\ref{fig:two_moments_mae} confirms this: prediction error drops steeply through level~2 and does not improve at level~3. To rule out the possibility that this saturation is an artifact of an information-losing transform, we apply a lossless sinh--arcsinh (SAS) correction that is a strict bijection verified invertible to machine precision (Appendix~\ref{app:sas}). Even with this lossless correction, MAE rises by $12.7\%$ (MLP) and $33.0\%$ (Transformer) relative to ICW. A capacity sweep confirms the cause: the third-moment target carries identical information to the whitened target, so the fit gap shrinks $34\times$ as model width grows from $128$ to $1024$ units, confirming that the penalty is approximation difficulty introduced by a mild functional nonlinearity in the $\mathbf{x}\to\mathbf{z}$ map, not a recoverable distributional gap. Two moments therefore, suffice: for the N-2 Most contingency, they remove all of the topology-induced shift that any linear, closed-form correction can recover.

\begin{figure}[h]
\centering
\includegraphics[width=\linewidth]{ 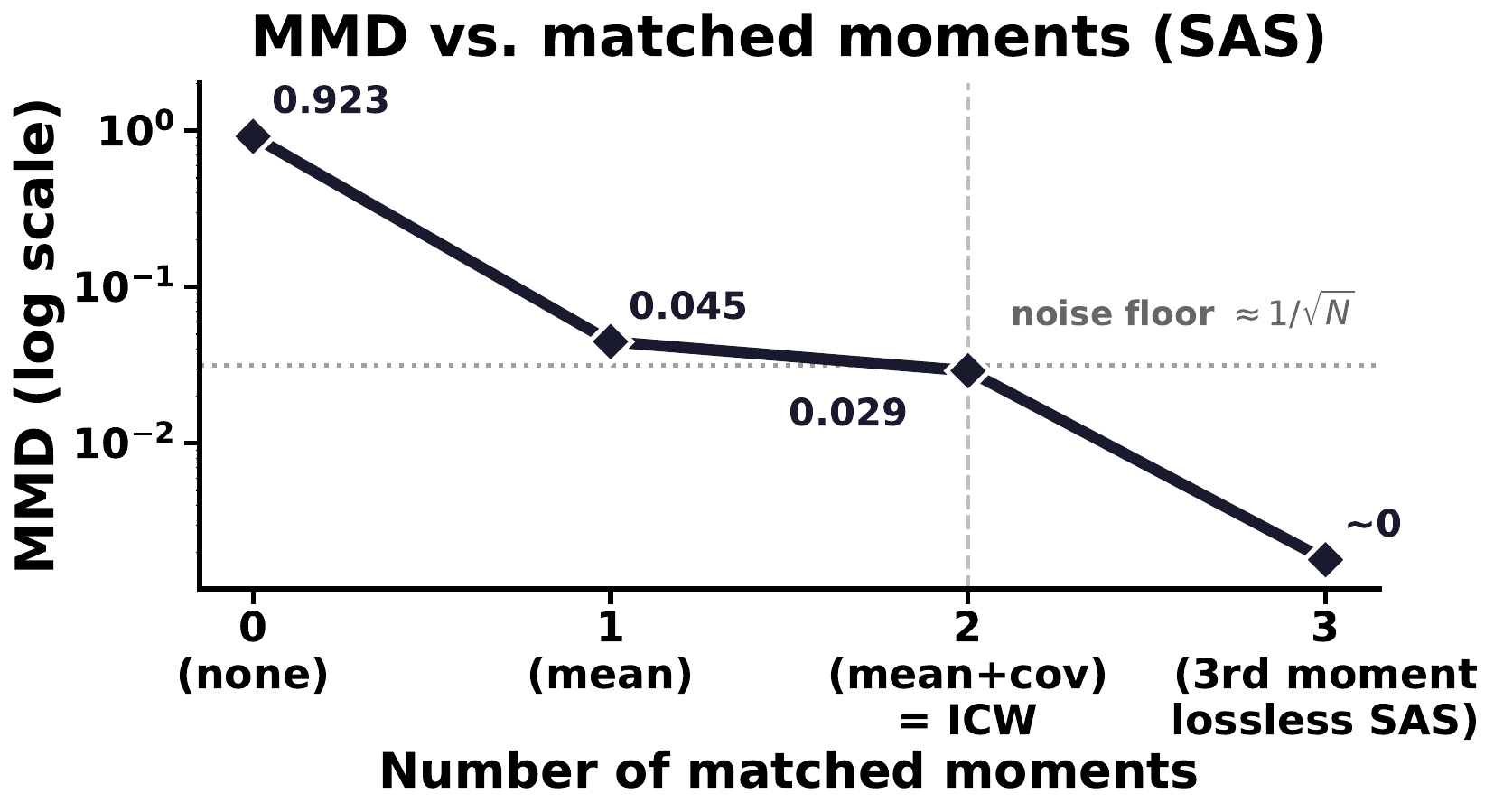}
\caption{MMD between $\mathcal{D}_{\mathrm{base}}$ and
$\mathcal{D}_{N\text{-}2}$ as moments are matched in sequence (30-Bus, 20\%
load, N = 1000 samples). The level~2 value of $0.029$ already lies at the
sampling noise floor $1/\sqrt{N}\approx0.032$, confirming that two moments
saturate the recoverable cross-topology gap.}
\label{fig:two_moments_mmd}
\end{figure}

\begin{figure}[h]
\centering
\includegraphics[width=\linewidth]{ 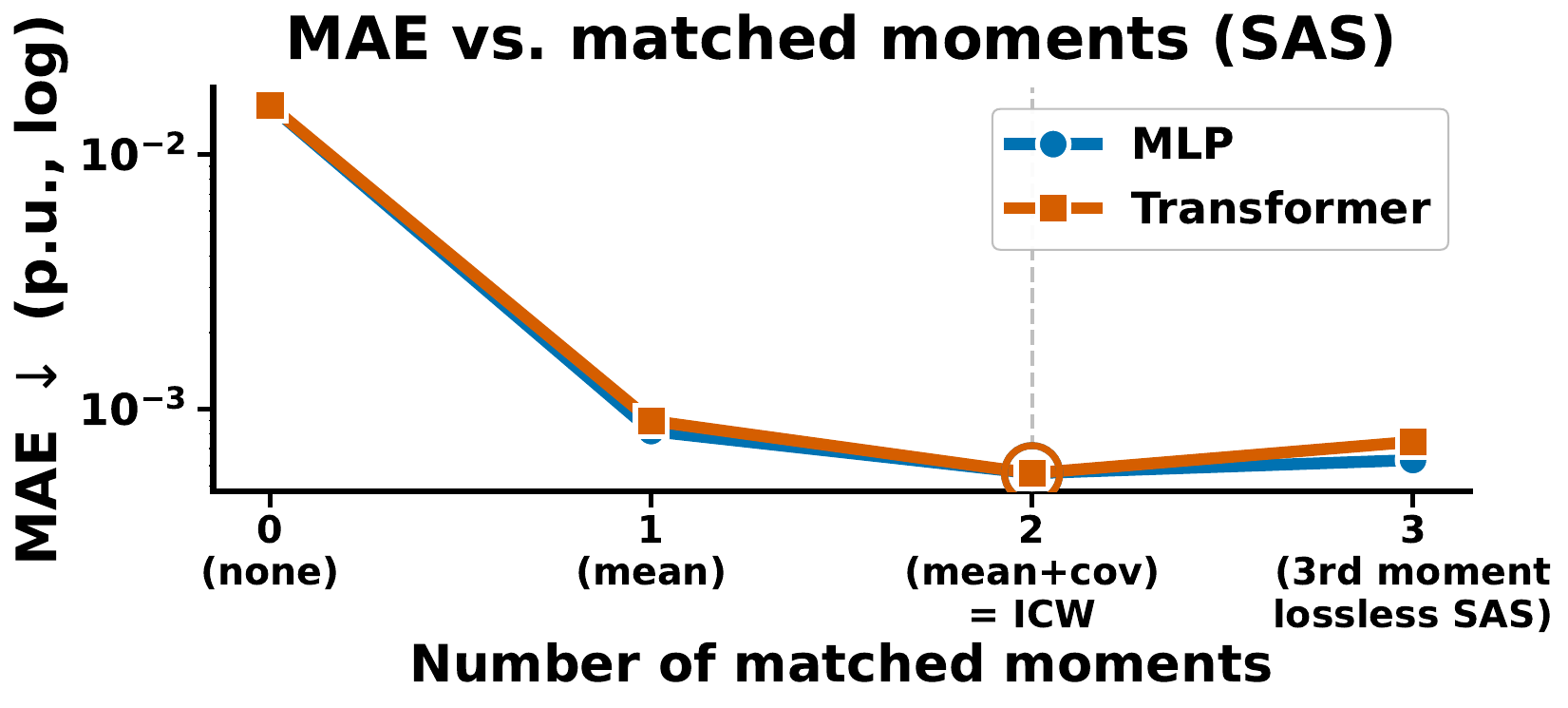}
\caption{MAE versus number of matched moments (30-Bus, 20\% load,
$N_{\mathrm{ctx}}=400$, MLP and Transformer). Error drops steeply through
level~2 and rises at level~3 even with a lossless sinh--arcsinh transform,
confirming that the saturation is not an artifact of information loss.}
\label{fig:two_moments_mae}
\end{figure}

\section{Sinh--Arcsinh Third-Moment Correction}
\label{app:sas}

The standard Cornish--Fisher (CF3) third-moment correction applies the per-dimension map $f(z) = z - \frac{\gamma}{6}(z^2-1)$. Since $f$ is a parabola, it folds at $z^* = 3/\gamma$ and is two-to-one beyond that point. With $\gamma=1$ for example, both $z=2$ and $z=4$ map to $f=1.5$, so the inverse cannot recover which preimage was intended. The sinh--arcsinh (SAS) transform avoids this structural problem by replacing the parabola with a strictly monotone bijection,
\begin{equation}
z_s = \sinh\!\left(\mathrm{arcsinh}(z) - \varepsilon\right),
\quad
z = \sinh\!\left(\mathrm{arcsinh}(z_s) + \varepsilon\right),
\label{eq:sas}
\end{equation}
where $\varepsilon$ is a per-dimension skewness parameter fitted by bisection.

The derivative $dz_s/dz = \cosh\varepsilon - \sinh\varepsilon \cdot z/\sqrt{z^2+1}$ is strictly positive for all $z$ and $\varepsilon$, so the map is a global bijection with no fold and no clipping. Table~\ref{tab:sas_roundtrip} verifies this on real data: the roundtrip error is at machine precision, confirming that no information is lost. Given that the transform is lossless, any remaining fit gap between level~2 and level~3 must come from approximation difficulty rather than missing information. Table~\ref{tab:sas_capacity} confirms this: the $R^2(Z)$ gap between level~2 and level~3 shrinks $34\times$ as model width grows from 128 to 1024 units, the signature of a harder function to approximate, not a harder target to recover. Finally, Figure~\ref{fig:base_vs_n2} shows that the saturation at level~2 holds per output quantity: voltage MAE at level~3 is within $2\%$ of level~2 on both the base and N-2 topologies, confirming that the third-moment correction is neutral across all physical quantities and not just in aggregate.

\begin{table}[h]
\centering
\caption{Round-trip invertibility of the SAS transform on real data (30-Bus,
20\% load, $n=4000$ training samples).}
\label{tab:sas_roundtrip}
\begin{tabular}{lc}
\toprule
\textbf{Check} & \textbf{Value} \\
\midrule
Train roundtrip $|\mathrm{SAS}^{-1}(\mathrm{SAS}(Z)) - Z|$ & $1.88\times10^{-9}$ \\
Inference context roundtrip MAE & $1.97\times10^{-9}$ \\
Max $|\mathrm{SAS}^{-1}(\mathrm{true\ SAS}) - Z|$ (val) & $2.4\times10^{-7}$ \\
\bottomrule
\end{tabular}
\end{table}

\begin{table}[h]
\centering
\caption{Validation $R^2$ and MAE versus model width for level~2 (ZCA) and
level~3 (SAS) targets (30-Bus, 20\% load). The $R^2(Z)$ gap shrinks $34\times$
from width 128 to 1024, confirming the third-moment target carries identical
information but is harder to approximate.}
\label{tab:sas_capacity}
\begin{tabular}{r cc cc c}
\toprule
\textbf{Width} & \textbf{L2 $R^2(Z)$} & \textbf{L3 $R^2(Z)$}
& \textbf{L2 MAE} & \textbf{L3 MAE} & \textbf{$R^2$ gap} \\
\midrule
128  & 0.99190 & 0.98743 & 0.000293 & 0.000477 & 0.00447 \\
256  & 0.99732 & 0.99611 & 0.000179 & 0.000274 & 0.00121 \\
512  & 0.99913 & 0.99874 & 0.000103 & 0.000158 & 0.00039 \\
1024 & 0.99972 & 0.99960 & 0.000059 & 0.000088 & 0.00013 \\
\bottomrule
\end{tabular}
\end{table}

\begin{figure}[t]
\centering
\includegraphics[width=\linewidth]{ 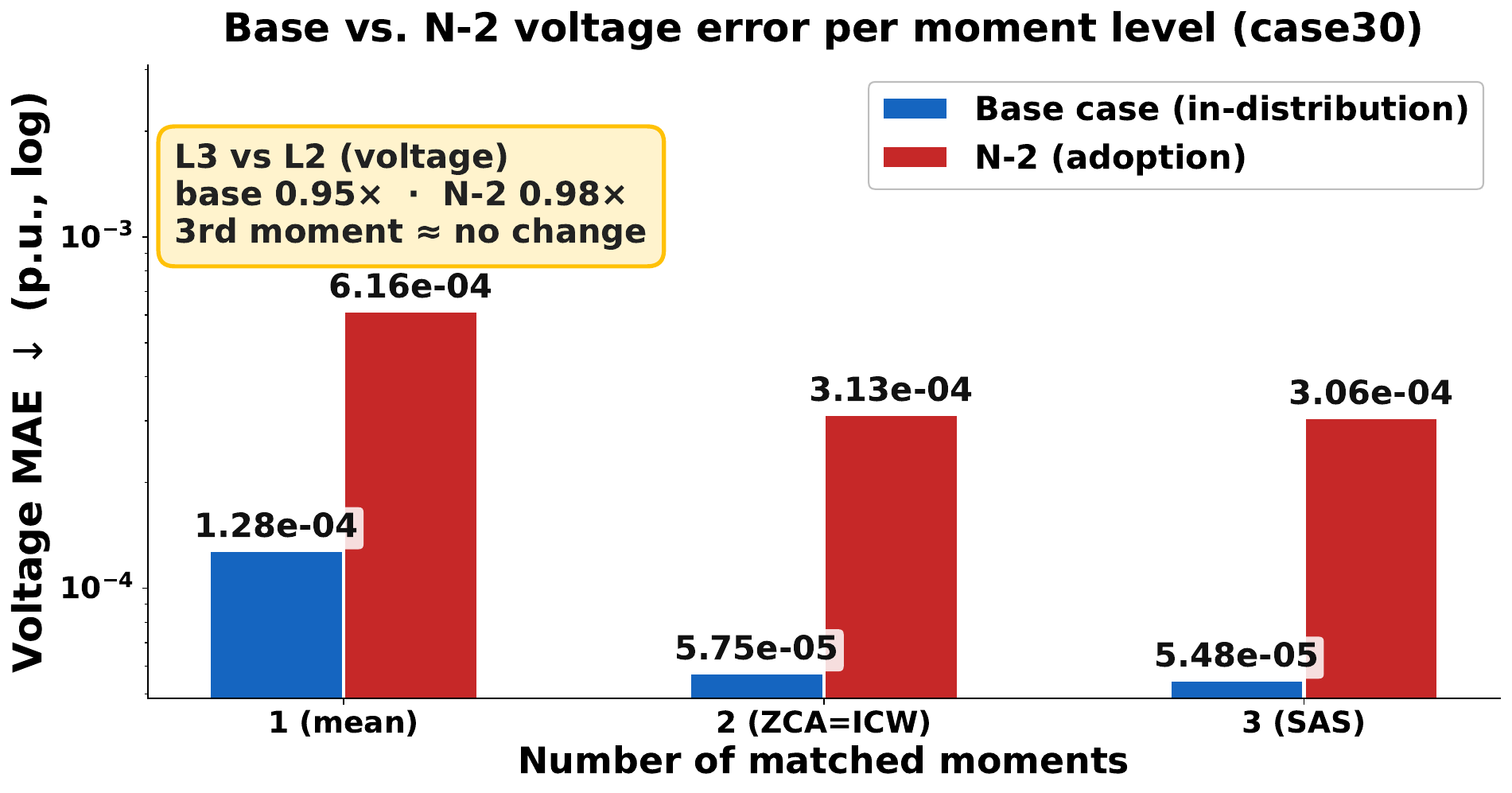}
\caption{Voltage MAE on the base topology and on the N-2 contingency per moment
level (30-Bus, 20\% load). Level~3 tracks level~2 on both domains ($0.95\times$
on base, $0.98\times$ on N-2), confirming that the third-moment correction is
neutral for voltage in and out of distribution.}
\label{fig:base_vs_n2}
\end{figure}

\section{Architecture and Training Hyperparameters}
\label{app:impl}

All models are trained with the AdamW optimiser, CosineAnnealingLR scheduler with $\eta_{\min}=10^{-6}$, batch size 32, gradient clipping at norm 1.0, and weight decay $10^{-5}$. Input $\mathbf{x}$ is always z-score normalised using training statistics, independent of which output preprocessing variant from Table~\ref{tab:preprocessing_configs} is being evaluated. Table~\ref{tab:hyperparams} summarises the architecture-specific hyperparameters.

\begin{table}[h]
\centering
\caption{Architecture specifications and training hyperparameters for all three
backbone models. A dash (--) indicates a parameter that is structurally not
applicable to that architecture (for example, patch size applies only to the
Transformer's patch embedding, and node/edge feature dimensions apply only to
PFGNN's graph structure), not a missing value.}
\label{tab:hyperparams}
\renewcommand{\arraystretch}{1.3}
\setlength{\tabcolsep}{3pt}
\begin{tabular}{l c c c }
\toprule
\textbf{Parameter} & \textbf{MLP} & \textbf{Transf.} & \textbf{PFGNN}~\cite{lin2024}  \\
\midrule
Model / hidden dim. & 256 & 128 & 129  \\
Layers & 2 & 2 & 4 \\
Activation & GELU & GELU & ReLU \\
Dropout & 0.15 & 0.15 & 0.2  \\
Parameters & $\approx$100k & $\approx$292k & $\approx$355k \\
\midrule
Attention heads & -- & 4 & --  \\
Feedforward dim. & -- & 256 & -- \\
Patch size & -- & 12 & --  \\
\midrule
\multicolumn{4}{l}{\textbf{Shared training settings}} \\
\midrule
Optimiser & \multicolumn{3}{l}{AdamW} \\
Scheduler & \multicolumn{3}{l}{CosineAnnealingLR ($\eta_{\min}=10^{-6}$)} \\
Learning rate & \multicolumn{3}{l}{$6.29\times10^{-4}$} \\
Batch size & \multicolumn{3}{l}{32 } \\
Training steps & \multicolumn{3}{l}{40{,}000 (MLP/Transformer/GNN)} \\
Gradient clip & \multicolumn{3}{l}{1.0 (norm)} \\
Weight decay & \multicolumn{3}{l}{$10^{-5}$} \\
\bottomrule
\end{tabular}
\end{table}

\section{Per-Metric Improvement Under N-2 Contingencies}
\label{app:permetric-n2}

Figure~\ref{fig:ff_fig4a_n1} in the main text reports the per-metric frozen-vs-ICW improvement under N-1 Most. Figure~\ref{fig:ff_fig4a_n2} gives the N-2 Most companion across the same three systems. The improvement factors are uniformly larger than in the N-1 setting, consistent with the more severe operator shift a double outage induces: Overall MAE improves by $28\times$, $18\times$, and $8\times$ on 30-, 118-, and 300-Bus, with the largest per-quantity gains on reactive power ($Q_g$ up to $54\times$ on 30-Bus) and voltage magnitude ($|V|$ up to $47\times$). The whitening correction therefore recovers proportionally more accuracy exactly where the frozen surrogate degrades most.

\begin{figure*}[t]
\centering
\includegraphics[width=\linewidth]{ 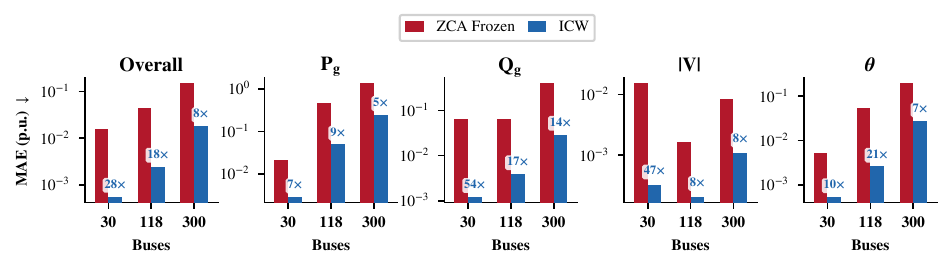}
\caption{Per-metric MAE (p.u.) for Frozen vs.\ ICW under simultaneous N-2
Most-Severe contingencies for IEEE case 30, 118, and 300 networks. ICW delivers
uniformly larger improvements than in the N-1 setting.}
\label{fig:ff_fig4a_n2}
\end{figure*}

\section{N-2 Robustness at the 50\% Load Setting}
\label{app:n2_50pct}

The robustness results in Section~\ref{sec:full-contingency} check every N-1 contingency at the $20\%$ load setting. Here we ask a harder question: does ICW's improvement still hold when both the contingency and the load perturbation are pushed further, to every N-2 double-line-outage pair on 30-Bus at the more severe $50\%$ load setting?

We evaluate the frozen baseline and ICW on all $677$ N-2 contingency pairs that remain connected and Newton--Raphson convergent on 30-Bus at $50\%$ load, the same selection criterion used throughout this paper. To see whether ICW's improvement depends on how hard a given contingency already is for the frozen baseline, we group these $677$ pairs into terciles by frozen-model MAE (low, mid, high severity), the same grouping used for the physical-validity check in Figure~\ref{fig:ff_fig5c}.

Figure~\ref{fig:ff_fig5a_n2} shows the result. ICW reduces MAE in every severity group without exception, with a median improvement factor of $7.6\times$ and a worst-case (bottom $1\%$) factor of $2.3\times$. The improvement is therefore not limited to the easier, lower-severity contingencies, nor to the $20\%$ load setting used elsewhere: it holds even on the hardest N-2 pairs at the more severe $50\%$ perturbation level.

\begin{figure}[t]
\centering
\includegraphics[width=\linewidth]{ 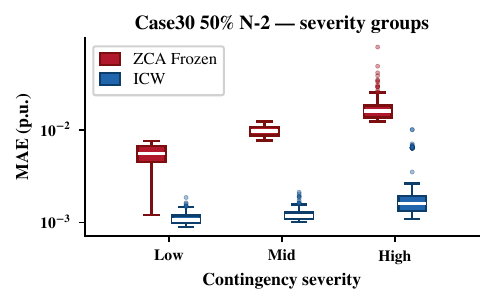}
\caption{MAE distribution of the frozen ZCA baseline and ICW across all $677$
N-2 double-line-outage contingencies on 30-Bus ($50\%$ load perturbation),
grouped into terciles by frozen-model MAE (low, mid, high severity).}
\label{fig:ff_fig5a_n2}
\end{figure}

\section{Distributional Shift and Per-Contingency Error}
\label{app:mmd_gap}

This appendix complements the full-contingency-set robustness results in Section~\ref{sec:full-contingency}. To understand what drives the residual variation in ICW's own accuracy across contingencies, Figures~\ref{fig:ff_fig5b1} and~\ref{fig:ff_fig5b3} plot, for each contingency, the portion of the operator shift ICW closes, $\Delta\mathrm{MMD} = \mathrm{MMD}_Y - \mathrm{MMD}_Z$, where $\mathrm{MMD}_Y$ is the bound's $\mathrm{MMD}_k$ term measured in the raw output space and $\mathrm{MMD}_Z$ is the residual MMD after ZCA whitening, against ICW's own relative error increase over its base-case accuracy, $(e_k - e_{\mathrm{base}})/e_{\mathrm{base}}$, for 30-Bus and case118 respectively. A small number of the hardest contingencies are excluded from each scatter plot only (three for 30-Bus, one for case118), since their relative error increase is computed against a very small base-case ICW MAE denominator, which makes their ratio disproportionately large and would distort the plot's scale; this exclusion affects only these two scatter plots and not the line counts reported elsewhere. Within each system, contingencies where ICW closes less of $\mathrm{MMD}_k$ show a larger relative increase in error and those where it closes more show a smaller increase, exactly the relationship the bound in~\eqref{eq:bendavid} predicts: the more of the distributional shift the transform removes, the closer the contingency-domain error stays to the base-domain error.

\begin{figure}[t]
\centering
\includegraphics[width=\linewidth]{ 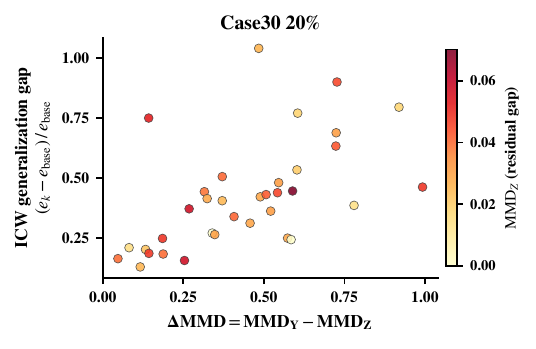}
\caption{Scatter plot of the distributional shift $\Delta\mathrm{MMD}$ vs.\ ICW
generalization gap $(e_k - e_\mathrm{base})/e_\mathrm{base}$ across all 38 N-1
contingencies, 30-Bus (20\% load). Three contingencies were excluded as noted
in the text.}
\label{fig:ff_fig5b1}
\end{figure}

\begin{figure}[t]
\centering
\includegraphics[width=\linewidth]{ 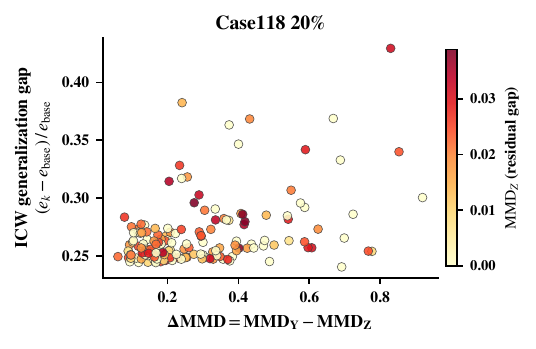}
\caption{Same as Fig.~\ref{fig:ff_fig5b1} for case118 (20\% load, 166 N-1
contingencies). One contingency excluded as noted in the text.}
\label{fig:ff_fig5b3}
\end{figure}

\section{Physical Validity of Predictions}
\label{app:physical-validity}

Section~\ref{sec:full-contingency} establishes that ICW holds its accuracy across the full contingency set in MAE. Here we check physical validity: the worst-bus AC power-balance mismatch $\|f(\mathbf{x}, \widehat{\mathbf{y}})\| = \max_i \sqrt{\Delta P_i^2 + \Delta Q_i^2}$ (excluding the slack bus) for ZCA Frozen and ICW predictions, with contingencies grouped into terciles by frozen-model MAE. As Figure~\ref{fig:ff_fig5c} shows, ICW reduces this mismatch by $14\times$, $18\times$, and $30\times$ for the low-, mid-, and high-severity tiers respectively, with the NR solver baseline itself at approximately $2\times10^{-6}$~p.u., several orders of magnitude below either method. ICW's gains therefore translate into predictions that are closer to satisfying the underlying power-flow equations, not just closer in MAE.

\begin{figure}[t]
\centering
\includegraphics[width=\linewidth]{ 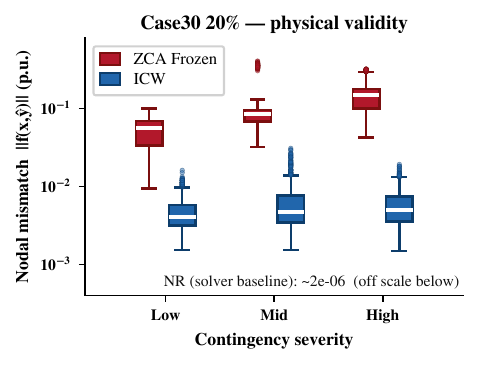}
\caption{Physical validity of predictions. Box plots of the worst-bus AC
power-balance mismatch $|f(x,\hat{y})| = \max_i\sqrt{\Delta P_i^2 + \Delta
Q_i^2}$ (p.u.) for ZCA Frozen and ICW predictions, grouped by contingency severity.
ICW reduces mismatch by $14\times$, $18\times$, and $30\times$ for low-, mid-,
and high-severity groups respectively. The NR solver baseline ($\approx
2\times10^{-6}$~p.u.) lies off scale below.}
\label{fig:ff_fig5c}
\end{figure}

\section{Context Size and Covariance Convergence}
\label{app:covariance-convergence}

This appendix provides the context-size and eigenvalue-level detail supporting the covariance half-convergence argument in Section~\ref{sec:context-size} (How Much Context Is Enough).

Table~\ref{tab:icw_ctx_sweep} reports how ICW's improvement over the frozen baseline scales with context size $N_{\mathrm{ctx}}$ on the 118- and 300-bus systems. On 118-bus, increasing $N_\mathrm{ctx}$ from $400$ to $1600$ raises the Overall improvement factor from $8.3\times$ to $17.2\times$; on 300-bus, increasing $N_\mathrm{ctx}$ from $800$ to $4000$ raises it from $1.4\times$ to $6.0\times$. Since a larger context means more NR solves and higher collection cost, $N_\mathrm{ctx}$ acts as a tunable accuracy-compute tradeoff: a small context already recovers most of ICW's gain, while a larger context yields further improvement at proportionally higher cost. The operator can therefore select $N_\mathrm{ctx}$ according to the desired accuracy-compute tradeoff.

\begin{table}[t]
\centering
\small
\setlength{\tabcolsep}{4pt}
\caption{ICW absolute MAE (p.u.) and improvement factor over frozen baseline
(Frozen\,$\div$\,ICW, $\uparrow$ better), N-1 Most loaded line out contingency.
Bold: largest tested context.}
\label{tab:icw_ctx_sweep}
\begin{tabular}{r rrr rrr}
\toprule
& \multicolumn{3}{c}{\textbf{Absolute MAE (p.u.)}} &
  \multicolumn{3}{c}{\textbf{Improvement ($\times$, $\uparrow$)}} \\
\cmidrule(lr){2-4} \cmidrule(l){5-7}
\multicolumn{1}{c}{{\boldmath$N_{\mathrm{ctx}}$}} &
\multicolumn{1}{c}{\textbf{Overall}} &
\multicolumn{1}{c}{{\boldmath$Q_\mathrm{gen}$}} &
\multicolumn{1}{c}{{\boldmath$|V|$}} &
\multicolumn{1}{c}{\textbf{Overall}} &
\multicolumn{1}{c}{{\boldmath$Q_\mathrm{gen}$}} &
\multicolumn{1}{c}{{\boldmath$|V|$}} \\
\midrule
\multicolumn{7}{l}{\textbf{{118-Bus} \boldmath($d_\mathrm{out} = 236$)}} \\
 400 & 0.00433 & 0.00621 & 0.000221 & \textbf{8.3}  & \textbf{8.2}  & \textbf{6.7} \\
 600 & 0.00415 & 0.00596 & 0.000215 & 8.6  & 8.6  & 6.9 \\
 800 & 0.00413 & 0.00587 & 0.000212 & 8.7  & 8.7  & 7.0 \\
1200 & 0.00269 & 0.00386 & 0.000195 & 13.3 & 13.2 & 7.6 \\
\textbf{1600} & \textbf{0.00208} & \textbf{0.00299} & \textbf{0.000186} &
  \textbf{17.2} & \textbf{17.1} & \textbf{8.0} \\
\midrule
\multicolumn{7}{l}{\textbf{{300-Bus} \boldmath($d_\mathrm{out} = 600$)}} \\
  800 & 0.01995 & 0.02023 & 0.000911 & \textbf{1.4} & \textbf{7.4}  & \textbf{2.8} \\
 1200 & 0.02042 & 0.02003 & 0.000898 & 1.3 & 7.5  & 2.9 \\
 2000 & 0.01925 & 0.01928 & 0.000841 & 1.4 & 7.8  & 3.0 \\
 3000 & 0.00819 & 0.01035 & 0.000512 & 3.3 & 14.5 & 5.0 \\
\textbf{4000} & \textbf{0.00451} & \textbf{0.00886} & \textbf{0.000446} &
  \textbf{6.0} & \textbf{17.0} & \textbf{5.7} \\
\bottomrule
\end{tabular}
\end{table}

Figure~\ref{ff_fig6_c30_stacked} gives the 30-Bus accuracy-cost tradeoff, the
companion to the case118 panel in the main text
(Fig.~\ref{ff_fig6_c118_stacked}). The MAE curve flattens and the Log-Euclidean
covariance distance halves by $N_\mathrm{ctx}=400$, at which point $89\%$ of the
achievable improvement is captured on 30-Bus, against $66\%$ on case118; the
gap is explained by the eigenvalue analysis below.

\begin{figure}[t]
\centering
\includegraphics[width=\linewidth]{ 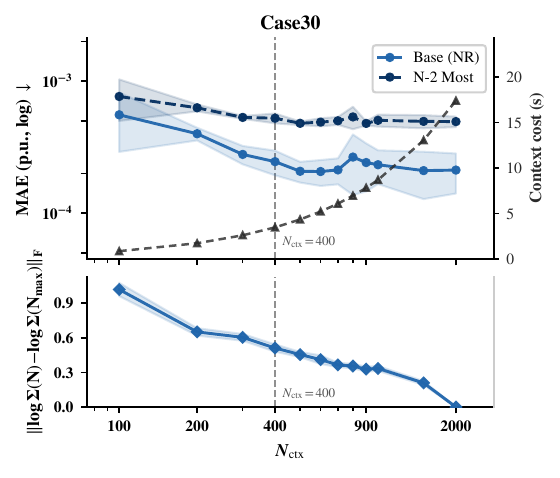}
\caption{\textbf{30-Bus: accuracy-cost tradeoff with increasing context.}
\emph{Top:} Overall MAE vs.\ $N_\mathrm{ctx}$ (log scale) on Base (NR) and N-2
Most contingency, averaged over 5 independently drawn context sets per
$N_\mathrm{ctx}$ (shaded band: $\pm1$ std), with context-collection wall time on
the right axis ($8.69$\,ms/NR-solve). \emph{Bottom:} Log-Euclidean covariance
distance $\|\log\boldsymbol{\Sigma}(N) - \log\boldsymbol{\Sigma}(N_\mathrm{max})\|_F$
from the converged reference at $N_\mathrm{max}=2000$.}
\label{ff_fig6_c30_stacked}
\end{figure}

Figure~\ref{ff_fig6_app_spectrum} shows the ordered eigenvalue spectrum of
$\boldsymbol{\Sigma}(N_\mathrm{ctx})$ for 30-Bus and case118. The spectrum
decays by three to five orders of magnitude across the first $10$--$20$ indices,
giving an effective rank of roughly $12$--$15$ for 30-Bus and $15$--$20$ for
case118, far below their ambient output dimensions of $60$ and $236$. The
leading eigenvalues are visually indistinguishable across all tested
$N_\mathrm{ctx}$, while the tail eigenvalues fan out at small $N_\mathrm{ctx}$
and converge inward as $N$ grows. Since the ZCA whitening matrix
$\mathbf{W} = \mathbf{U}\boldsymbol{\Lambda}^{-1/2}\mathbf{U}^\top$ is dominated
by the leading eigenpairs, the whitening transform stabilises well before the
full spectrum converges, consistent with the large MAE improvement already
achieved at moderate $N_\mathrm{ctx}$ in the main text.

Figure~\ref{ff_fig6_app_deviation} quantifies this at the level of individual
eigenvalues, plotting the relative deviation
$(\lambda_i(N) - \lambda_i(2000))/\lambda_i(2000) \times 100\%$ per index $i$ and
context size $N_\mathrm{ctx}$. Leading indices converge within $\pm5\%$ of the
reference even at $N=100$, confirming that the dominant geometric structure of
$\boldsymbol{\Sigma}$ is captured with minimal context; tail indices show larger
deviation and settle progressively. The deviation of
$\boldsymbol{\Sigma}(1500)$ is near zero across all indices for both systems,
confirming $N_\mathrm{max}=2000$ is itself a valid converged reference rather
than a moving target. The wider oscillations visible for case118 at small
$N_\mathrm{ctx}$, particularly in the mid-to-tail indices, reflect its higher
output dimension and explain why only $66\%$ of the achievable MAE improvement
is captured at $N_\mathrm{ctx}=400$, compared to $89\%$ for 30-Bus.

\begin{figure}[t]
\centering
\includegraphics[width=\linewidth]{ 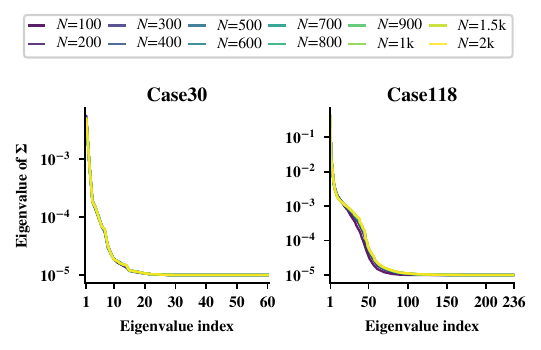}
\caption{\textbf{Eigenvalue spectra of the context covariance
$\boldsymbol{\Sigma}(N_\mathrm{ctx})$.} Ordered spectra for each
$N_\mathrm{ctx}$ (see legend), 30-Bus and case118. Dominant eigenvalues
stabilise by $N_\mathrm{ctx}=200$; tail eigenvalues continue to shift, with
case118 exhibiting a richer tail due to its higher dimensionality ($d=236$).}
\label{ff_fig6_app_spectrum}
\end{figure}

\begin{figure}[t]
\centering
\includegraphics[width=\linewidth]{ 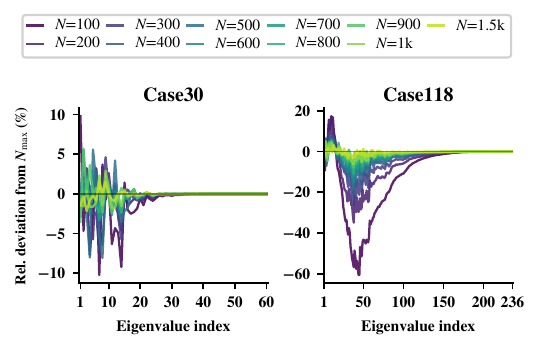}
\caption{\textbf{Per-index eigenvalue deviation from $N_\mathrm{max}=2000$.}
Relative deviation $(\lambda_i(N) - \lambda_i(2000))/\lambda_i(2000) \times
100\%$ per eigenvalue index $i$ and context size $N_\mathrm{ctx}$, 30-Bus and
case118. Leading indices converge within $\pm5\%$ of the reference even at
$N=100$; tail indices settle progressively, with case118 showing broader
oscillations at small $N_\mathrm{ctx}$.}
\label{ff_fig6_app_deviation}
\end{figure}

\section{Cost Detail: Fine-Tuning and Scratch Training}
\label{app:cost-detail}

Section~\ref{sec:cost-retraining} compares ICW against gradient-based adaptation
at deployment scale. This appendix reports the per-single-contingency cost
curves behind that comparison on the 30-Bus N-2 Most contingency, for
fine-tuning from base-trained weights and for training from scratch. We measure
total cost as $T_\mathrm{data} + T_\mathrm{grad} + T_\mathrm{inf}$. For ICW,
$T_\mathrm{grad}=0$ always: at $N_\mathrm{ctx}=400$
($T_\mathrm{data}=4.968$\,s) its total is $5.059$\,s (MLP) and $5.257$\,s
(PFGNN), and at $N_\mathrm{ctx}=1000$ ($T_\mathrm{data}=8.686$\,s) it is
$8.772$\,s and $8.994$\,s. Fine-tuning and scratch training add a
$T_\mathrm{grad}$ that grows with the number of gradient steps.

\textbf{Fine-tuning.} We take the base-trained MLP and PFGNN used by ICW and fine-tune them (no-norm) on the same context set of solved NR samples ICW uses for its inference-time statistics, at $N_\mathrm{ctx}=400$ and $N_\mathrm{ctx}=1$k, tracking Voltage MAE against total wall-clock cost as the number of steps grows. Figure~\ref{ff_figure3_combined} shows the ICW star of both context sizes at the bottom-left of each panel, the lowest MAE at the lowest cost. Fine-tuning's curve approaches but does not cross below ICW's accuracy within the tested step range, even as its cost grows past $30\times$ ICW's. Fine-tuning starts from the same base-trained weights ICW uses, so it begins closer to a usable solution than a randomly initialized model, but it still pays a per-contingency gradient-based cost that ICW does not.

\begin{figure}[t]
\centering
\includegraphics[width=\linewidth]{ 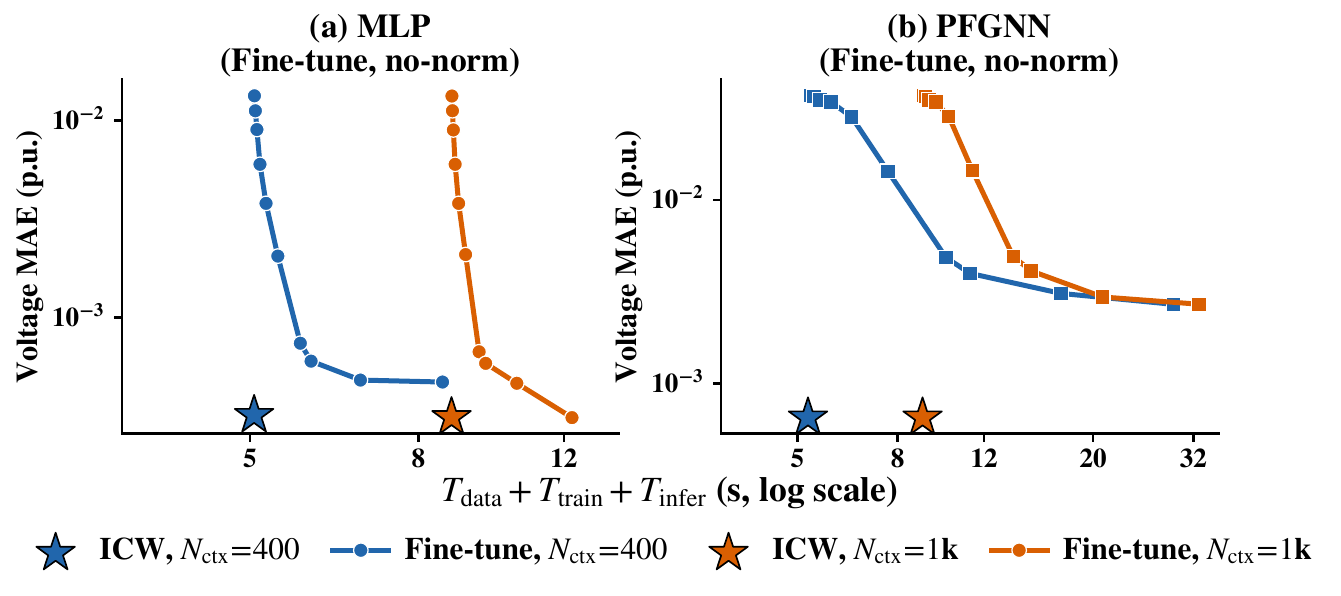}
\caption{\textbf{ICW vs.\ fine-tuning: cost and accuracy.} Voltage MAE (p.u.)
vs.\ total cost $T_\mathrm{data} + T_\mathrm{grad} + T_\mathrm{inf}$ (log
scale) for (a) MLP and (b) PFGNN, fine-tuned (no-norm) on the N-2 Most
contingency's context set at $N_\mathrm{ctx}=400$ and $N_\mathrm{ctx}=1$k. Stars
mark ICW at the same context sizes; circles/squares trace fine-tuning progress
through increasing gradient steps.}
\label{ff_figure3_combined}
\end{figure}

\textbf{Scratch training.} We next train MLP and PFGNN entirely from scratch on the same N-2 Most contingency and the same two context sizes, rather than fine-tuning from base-trained weights. The scratch-trained model starts from random initialization, so it must learn the entire input-output mapping from the context set alone rather than adjust an already-trained mapping; this isolates how much of fine-tuning's head start comes from starting near a good solution rather than from the gradient steps themselves. We run this under both no-norm and z-score preprocessing, since the two start from different places: no-norm leaves the full mean, variance, and covariance shift for gradient descent to learn unaided, while z-score removes the mean and per-dimension variance before training begins, leaving only the covariance structure to recover.

Table~\ref{tab:scratch_vs_icw_nonorm} shows what this costs under no-norm. At $100$ gradient steps, MLP's Voltage MAE is $0.1935$\,p.u., roughly $645\times$ worse than ICW's $0.0003$\,p.u., despite already costing more wall-clock time than ICW's full context-collection cost of $5.059$\,s. Scratch training only approaches ICW's accuracy after $2000$--$4000$ steps, by which point its total cost ($6.8$--$8.6$\,s for MLP, $17.2$--$29.2$\,s for PFGNN) exceeds ICW's and its MAE is still $14$--$35\times$ worse. Without the head start fine-tuning gets from pretrained weights, the model spends most of its gradient budget just recovering the mean and variance structure ICW gets for free from its closed-form statistics.

Table~\ref{tab:scratch_vs_icw_zscore} shows the same experiment under z-score. Here the story changes: at just $100$ steps MLP's Voltage MAE is already $0.0007$\,p.u., close to ICW almost immediately, since z-score has already removed most of the shift that no-norm left for gradient descent to learn. PFGNN closes its gap more slowly, sitting at $0.0011$\,p.u.\ at $100$ steps and only matching or beating ICW's $0.0007$\,p.u.\ once it reaches roughly $2000$--$4000$ steps. What remains for gradient descent under z-score is the covariance structure, the part ZCA whitening captures directly and z-score does not; ICW closes this gap with zero gradient steps, while scratch training still needs thousands to close it on this one contingency alone.

Critically, every cost above, for fine-tuning and scratch training under both no-norm and z-score, is paid for this one contingency only. ICW's surrogate is trained once on the base topology and adapts to each new contingency from context alone; fine-tuning or retraining from a single contingency's context set yields a model adapted to that contingency, with no benefit carried to the next. Repeating either procedure across 30-Bus's 38 N-1 contingencies (Section~\ref{sec:full-contingency}) would multiply its cost by $38\times$, and across case118's N-1 set by the corresponding factor, while ICW pays its context-collection cost once per contingency regardless of how many are encountered.

\begin{table}[t]
\caption{Scratch training vs.\ ICW: computation time and Voltage MAE (no-norm;
$N\!-\!2$ Most; IEEE 30-Bus; $T_{\rm grad}\!=\!0$ for ICW).}
\label{tab:scratch_vs_icw_nonorm}
\centering
\footnotesize
\setlength{\tabcolsep}{4pt}
\begin{tabular}{lrrrr}
\toprule
\textbf{Grad. Steps} & \multicolumn{2}{c}{\textbf{Time~(s)}} & \multicolumn{2}{c}{\textbf{Volt.~MAE}} \\
\midrule
\multicolumn{5}{l}{\textit{(A) $N_{\rm ctx}=400$ context samples}} \\[2pt]
\textbf{ICW (ours)} & \textbf{5.059} & \textbf{5.257} & \textbf{0.0003} & \textbf{0.0007} \\
\midrule
& \textbf{MLP} & \textbf{PFGNN} & \textbf{MLP} & \textbf{PFGNN} \\
100 & 5.143 & 5.857 & 0.1935 & 0.0834 \\
200 & 5.230 & 6.454 & 0.0782 & 0.0807 \\
400 & 5.405 & 7.649 & 0.0367 & 0.0472 \\
800 & 5.755 & 10.039 & 0.0224 & 0.0207 \\
1000 & 5.930 & 11.234 & 0.0192 & 0.0202 \\
2000 & 6.805 & 17.209 & 0.0106 & 0.0157 \\
4000 & 8.555 & 29.159 & 0.0042 & 0.0112 \\
\midrule
\multicolumn{5}{l}{\textit{(B) $N_{\rm ctx}=1000$ context samples}} \\[2pt]
\textbf{ICW (ours)} & \textbf{8.772} & \textbf{8.994} & \textbf{0.0003} & \textbf{0.0007} \\
\midrule
& \textbf{MLP} & \textbf{PFGNN} & \textbf{MLP} & \textbf{PFGNN} \\
100 & 8.861 & 9.575 & 0.1547 & 0.0808 \\
200 & 8.948 & 10.172 & 0.0665 & 0.0657 \\
400 & 9.123 & 11.367 & 0.0292 & 0.0498 \\
800 & 9.473 & 13.757 & 0.0191 & 0.0271 \\
1000 & 9.648 & 14.952 & 0.0149 & 0.0226 \\
2000 & 10.523 & 20.927 & 0.0079 & 0.0130 \\
4000 & 12.273 & 32.877 & 0.0036 & 0.0073 \\
\bottomrule
\end{tabular}
\end{table}

\begin{table}[t]
\caption{Scratch training vs.\ ICW: computation time and Voltage MAE (z-score;
$N\!-\!2$ Most; IEEE 30-Bus; $T_{\rm grad}\!=\!0$ for ICW).}
\label{tab:scratch_vs_icw_zscore}
\centering
\footnotesize
\setlength{\tabcolsep}{4pt}
\begin{tabular}{lrrrr}
\toprule
\textbf{Grad. Steps} & \multicolumn{2}{c}{\textbf{Time~(s)}} & \multicolumn{2}{c}{\textbf{Volt.~MAE}} \\
\midrule
\multicolumn{5}{l}{\textit{(A) $N_{\rm ctx}=400$ context samples}} \\[2pt]
\textbf{ICW (ours)} & \textbf{5.059} & \textbf{5.257} & \textbf{0.0003} & \textbf{0.0007} \\
\midrule
& \textbf{MLP} & \textbf{PFGNN} & \textbf{MLP} & \textbf{PFGNN} \\
100 & 5.143 & 5.857 & 0.0007 & 0.0011 \\
200 & 5.230 & 6.454 & 0.0004 & 0.0011 \\
400 & 5.405 & 7.649 & 0.0003 & 0.0011 \\
800 & 5.755 & 10.039 & 0.0002 & 0.0009 \\
1000 & 5.930 & 11.234 & 0.0002 & 0.0009 \\
2000 & 6.805 & 17.209 & 0.0001 & 0.0007 \\
4000 & 8.555 & 29.159 & 0.0001 & 0.0004 \\
\midrule
\multicolumn{5}{l}{\textit{(B) $N_{\rm ctx}=1000$ context samples}} \\[2pt]
\textbf{ICW (ours)} & \textbf{8.772} & \textbf{8.994} & \textbf{0.0003} & \textbf{0.0007} \\
\midrule
& \textbf{MLP} & \textbf{PFGNN} & \textbf{MLP} & \textbf{PFGNN} \\
100 & 8.861 & 9.575 & 0.0008 & 0.0011 \\
200 & 8.948 & 10.172 & 0.0004 & 0.0011 \\
400 & 9.123 & 11.367 & 0.0002 & 0.0011 \\
800 & 9.473 & 13.757 & 0.0001 & 0.0010 \\
1000 & 9.648 & 14.952 & 0.0001 & 0.0009 \\
2000 & 10.523 & 20.927 & 0.0001 & 0.0005 \\
4000 & 12.273 & 32.877 & 0.0001 & 0.0004 \\
\bottomrule
\end{tabular}
\end{table}

\end{document}